\documentclass[10pt,journal,compsoc]{IEEEtran}
%
% If IEEEtran.cls has not been installed into the LaTeX system files,
% manually specify the path to it like:
% \documentclass[10pt,journal,compsoc]{../sty/IEEEtran}
\usepackage{cite}
\usepackage{amsmath,amssymb,amsfonts}
\usepackage{tabularx,booktabs}
\usepackage{CJK}
\usepackage{algorithmic}
\usepackage{graphicx}
\usepackage{textcomp}
\usepackage{xcolor}
\usepackage{threeparttable}
\usepackage{multirow}
\usepackage{subfigure}
\usepackage{threeparttable}
\usepackage{caption}
\usepackage{amsthm}
\usepackage{epstopdf}
\usepackage{enumitem}

% *** Do not adjust lengths that control margins, column widths, etc. ***
% *** Do not use packages that alter fonts (such as pslatex).         ***
% There should be no need to do such things with IEEEtran.cls V1.6 and later.
% (Unless specifically asked to do so by the journal or conference you plan
% to submit to, of course. )

% correct bad hyphenation here
\hyphenation{op-tical net-works semi-conduc-tor}
\newtheorem{proposition}{Proposition}

\def\BibTeX{{\rm B\kern-.05em{\sc i\kern-.025em b}\kern-.08em
		T\kern-.1667em\lower.7ex\hbox{E}\kern-.125emX}}

\begin{document}
%
% paper title
% Titles are generally capitalized except for words such as a, an, and, as,
% at, but, by, for, in, nor, of, on, or, the, to and up, which are usually
% not capitalized unless they are the first or last word of the title.
% Linebreaks \\ can be used within to get better formatting as desired.
% Do not put math or special symbols in the title.
\title{A Misreport- and Collusion-Proof Crowdsourcing Mechanism without Quality Verification}
%
%
% author names and IEEE memberships
% note positions of commas and nonbreaking spaces ( ~ ) LaTeX will not break
% a structure at a ~ so this keeps an author's name from being broken across
% two lines.
% use \thanks{} to gain access to the first footnote area
% a separate \thanks must be used for each paragraph as LaTeX2e's \thanks
% was not built to handle multiple paragraphs
%
%
%\IEEEcompsocitemizethanks is a special \thanks that produces the bulleted
% lists the Computer Society journals use for "first footnote" author
% affiliations. Use \IEEEcompsocthanksitem which works much like \item
% for each affiliation group. When not in compsoc mode,
% \IEEEcompsocitemizethanks becomes like \thanks and
% \IEEEcompsocthanksitem becomes a line break with idention. This
% facilitates dual compilation, although admittedly the differences in the
% desired content of \author between the different types of papers makes a
% one-size-fits-all approach a daunting prospect. For instance, compsoc 
% journal papers have the author affiliations above the "Manuscript
% received ..."  text while in non-compsoc journals this is reversed. Sigh.

\author{Kun Li, Shengling Wang*, ~\IEEEmembership{Member,~IEEE}, Xiuzhen Cheng, Qin Hu	
\IEEEcompsocitemizethanks{		

\IEEEcompsocthanksitem Kun Li and Shengling Wang (Corresponding author) are with School of Artificial Intelligence, Beijing Normal University (BNU), China. E-mail: likun@mail.bnu.edu.cn, wangshengling@bnu.edu.cn.

\IEEEcompsocthanksitem Xiuzhen Cheng is	with School of Computer Science and Technology, Shandong University (SDU), China. E-mail: xzcheng@sdu.edu.cn

\IEEEcompsocthanksitem Qin Hu is with the Department of Computer and Information Science, Indiana University - Purdue University Indianapolis, IN, USA. \protect E-mail: qinhu@iu.edu

}

\thanks{
	This work has been supported by National Key R\&D Program of China (No.2019YFB2102600), National Natural Science Foundation of China (No.61772080) and Engineering Research Center of Intelligent Technology and Educational Application, Ministry of Education.
}
}

\IEEEtitleabstractindextext{%
\begin{abstract}
Quality control plays a critical role in  crowdsourcing. The state-of-the-art work is not suitable  for large-scale crowdsourcing applications, since  it is a long haul for the requestor to verify  task quality or select professional workers in a one-by-one mode.  In this paper, we propose a misreport- and collusion-proof crowdsourcing mechanism,  guiding workers to truthfully report the quality of submitted tasks without collusion by designing a mechanism, so that workers have to act the way the requestor would like. In detail, the mechanism proposed by the requester makes no room for the workers to obtain profit through quality misreport and collusion, and thus, the quality can be controlled without any verification. Extensive simulation results verify the effectiveness of the proposed mechanism. Finally, the importance and originality of our work lie in that it reveals some interesting and even counterintuitive findings: 1) a high-quality worker may pretend to be a low-quality one; 2) the rise of task quality from high-quality workers may not result in the increased utility of the requestor; 3)  the utility of the requestor may not get improved with the increasing number of workers. These findings can boost forward looking and strategic planning solutions for crowdsourcing.
\end{abstract}

% Note that keywords are not normally used for peerreview papers.
\begin{IEEEkeywords}
Crowdsourcing, Collusion-proof, Mechanism design.
\end{IEEEkeywords}}

% make the title area
\maketitle

% To allow for easy dual compilation without having to reenter the
% abstract/keywords data, the \IEEEtitleabstractindextext text will
% not be used in maketitle, but will appear (i.e., to be "transported")
% here as \IEEEdisplaynontitleabstractindextext when the compsoc 
% or transmag modes are not selected <OR> if conference mode is selected 
% - because all conference papers position the abstract like regular
% papers do.
\IEEEdisplaynontitleabstractindextext
% \IEEEdisplaynontitleabstractindextext has no effect when using
% compsoc or transmag under a non-conference mode.

% For peer review papers, you can put extra information on the cover
% page as needed:
% \ifCLASSOPTIONpeerreview
% \begin{center} \bfseries EDICS Category: 3-BBND \end{center}
% \fi
%
% For peerreview papers, this IEEEtran command inserts a page break and
% creates the second title. It will be ignored for other modes.
\IEEEpeerreviewmaketitle

\section{Introduction}\label{Introduction}
The Internet openness renders crowdsourcing  to gather geographically dispersed human resources for accomplishing complex tasks  that are easy for human beings while difficult for machines. However, just also due to the openness of the Internet, any people can apply for participating in crowdsourcing, which may incur  unprofessional crowdsourcees (workers)  with low-quality contributions. Thus, there is a pressing need for quality control in crowdsourcing. The state-of-the-art quality control in crowdsourcing can be categorized into two kinds: task-based \cite{oleson2011programmatic, wu2017photo, xu2015revealing, jin2017leveraging} and worker-based \cite{qiu2017dynamic, hu2019quality, han2016crowdsourcing, he2015high, tarable2015importance}. The former kind is a direct way which proposes  different approaches to evaluate or develops various  tools to monitor the task quality. The latter kind is an indirect means that selects qualified workers to guarantee the quality of submissions.

Existing quality control is not applicable for large-scale crowdsourcing applications, such as urban traffic monitoring and air-quality sensing,  since  it is a long haul for the crowdsourcer (requestor)  to verify task quality or select professional workers in a one-by-one mode. In this scenario, it is convenient to require workers to report the quality of their tasks based on which the requestor  pays them.  Such a naive way will obviously lead to two issues: 1) {\it misreport}. A worker may  report  his low-quality task as a high-quality one dishonestly. 2) {\it collusion.} High-quality workers\footnote {A high-quality worker refers to the one who submits the task with high quality. Accordingly, the low-quality worker is the one with the low-quality submitted tasks.} can save cost by recruiting low-quality ones to work for them, through which the low-quality workers can also gain extra income.

The above two issues make it ridiculous for the requestor to pay relying on the quality reported by the workers themselves. However, the nature of requiring no quality verification renders such a way attractive to large-scale crowdsourcing applications. Hence, to make this seemly ridiculous way  feasible, we propose a misreport-  and collusion-proof crowdsourcing without quality verification in this paper. The aim of our scheme is guiding workers to truthfully report the quality of submitted tasks without collusion by leveraging pricing and task allocation. In other words,  driven by the  market power, workers have to try their best to serve the requestor honestly, making the quality of tasks can be naturally guaranteed.

However, it is challenging to realize our aim since the capability  of a worker to complete  the task is kept private to the requestor. This information asymmetry leads to the uncertainty of the requestor's optimal strategy, which in turn causes the obscureness of  that of any worker, resulting in that the Pareto optimality cannot be achieved in crowdsourcing. To tackle the above challenge, we resort to the mechanism design game theory \cite{borgers2015introduction}, which empowers the requestor to dominate the game with workers through designing a {\it mechanism (game rule)}, so that workers have to  act the way  the requestor would like. In detail, the mechanism proposed by the requester  makes no room for the workers to obtain profit through quality misreport and collusion, and thus, the quality  can be controlled without any verification.

To the best of our knowledge, this is the first  work that simultaneously guards against quality misreport and collusion among workers in crowdsourcing. The contributions in our paper are summarized as follows:
\begin{itemize}
	\item A crowdsourcing framework without quality verification is proposed, where the mechanism design game theory is leveraged to guide workers to behave honestly.
	\item A special crowdsourcing mechanism for the two-worker model is designed, which includes three kinds of constraints: the participation-incentive constraint, the incentive compatibility constraint and the collusion-proof constraint. In addition, the property of the  cost function for any worker is deduced.
	\item A general crowdsourcing mechanism for multiple workers is extended from  the basic two-worker model, which has a particularly-designed collusion-proof constraint for the
	optimal collusion scheme in the multiple-worker scenario.
	\item Extensive simulation results verify the effectiveness of the proposed   misreport- and collusion-proof crowdsourcing mechanism.
\end{itemize}

Finally, the importance and originality of our work lie in that it reveals some interesting and even counterintuitive findings,  providing fresh insights into understanding the complexity of crowdsourcing. These  findings can boost forward looking and strategic planning solutions, which  are summarized as follows:
\begin{itemize}
	\item {\it A high-quality worker may pretend to be a low-quality one.}  The reason behind this anti-intuitive fact is when the number of low-quality workers is large, or when the task quality of high-quality workers enhances, the profit of misreport by low-quality workers increases. Thus, the requestor has to use high payment to  drive low-quality workers to behave honestly,  which in turn creates the motivation for high-quality workers to lie.
	\item {\it The rise of task quality from high-quality workers may not result in the increased utility of the requestor.} When the task quality of low-quality workers remains unchanged, the increase of task quality from high-quality workers implies the profits of collusion and misreport go up, leading to more incentives for malicious behaviors. In this case, the requestor has to pay more to reduce the possibility of misreport and collusion, thus may reduce her\footnote{In this paper, we use ``she" and ``he" to indicate the requestor and the worker, respectively.} utility.
	\item {\it The utility of the requestor  may not get improved with the increasing number of workers.}
	Specifically, more workers mean more chances of collusion and misreport, leading to high cost for the requestor to prevent the occurrence of these problems, which may lower her utility.
\end{itemize}

The rest of this paper proceeds as follows. In Section \ref{sec:related}, we summarize the related work on quality control in crowdsourcing. An overview of our proposed quality control framework is presented in Section \ref{sec:overview}, which is specifically elaborated in Section \ref{sec:two} for the basic two-worker model and further extended in Section \ref{sec:m} for the general $m$-worker scenario. In Section \ref{sec:performance}, we conduct substantial simulation experiments to evaluate the performance of our proposed mechanism and reveal several interesting findings. Finally, we conclude the whole paper in Section \ref{sec:conclusion}.

\section{Related work}\label{sec:related}
In recent years, as crowdsourcing is widely used in various fields, quality control has become the focus of research in the crowdsourcing area. Existing research on quality control can be generally classified into two types: \textit{task-based} and \textit{worker-based}.

Task-based quality control in crowdsourcing is to assess the quality of the tasks submitted by workers through various indicators. One of the most common methods is to use gold standard data \cite{oleson2011programmatic} to measure the quality of tasks. By comparing with standard data, it is easy to distinguish unqualified tasks from qualified ones.
In photo crowdsourcing, Wu \emph{et al.} \cite{wu2017photo} found it challenging to use limited resources to get photos covering the target area as much as possible, so they adopted a scheme named image quality assessment (IQA) to address this challenge, which was deployed in mobile devices to filter out those low-quality photos before sending metadata to the server thereby achieving quality control.
And when investigating crowdsourcing-based spammers, Xu \emph{et al.} \cite{xu2015revealing} revealed that the quality control was conducted through strictly setting the standard to check whether the work done by the spammer was qualified to be paid or not, which enforced a low payment rate of  22.6\% so as to stimulate spammers to improve the quality of their spam posts.
However, in most crowdsourcing scenarios, it is impractical or unnecessary to prepare gold standard data in advance, so researchers proposed some machine-learning based methods to assess task quality.
In \cite{jin2017leveraging}, Yuan \emph{et al.} tried to improve data quality in the process of crowdsourcing-based labeled data collection through extending a classic probabilistic model named GLAD \cite{whitehill2009whose}, which systematically encoded different types of side information, such as worker, item and context information.

Worker-based quality control is often achieved by selecting the appropriate set of workers with the help of some well-designed models or algorithms based on the workers' ability or reputation.
In \cite{qiu2017dynamic}, in order to slect high-quality workers, Qiu \emph{et al.} designed a dynamic contract for each worker based on the worker's performance and objective, so as to elicit high-quality submissions and prevent malicious behaviors.
Hu \emph{et al.} \cite{hu2019quality} used an economics-based philosophy to improve workers' quality in crowdsourcing. Taking advantage of their proposed incentive algorithms based on the sequential zero-determinant strategy, the requestor could use the market power to stimulate workers to submit high-quality results.
Han \emph{et al.} \cite{han2016crowdsourcing} presented a new crowdsoucing system to provide annotations for web information extraction, which could collect a wide set of behavioral features and predict annotation quality of each worker for annotating web page structure. They collected a set of workers' behavioral features and discovered the relationship between the crowdsourcing quality and the workers' behavioral features.
In vehicle-based crowdsourcing \cite{he2015high}, the relationship between spatiotemporal coverage and the vehicle trajectory was studied to design a new strategy for worker recruitment, which guaranteed the crowdsoucing quality by employing the best set of participants meeting the application requirements.
In \cite{tarable2015importance}, Tarable \emph{et al.} proposed a ``maximum a-posteriori" decision rule to help the requestor make decisions in reputation-based task allocation for crowdsourcing, which worked well even in the case of inaccurate reputation information.

In summary, all the above quality-control schemes are relying on additional assessing indicators or algorithms to select tasks or workers in a one-by-one manner, which are obviously tedious and inefficient, leading to their inapplicability in large-scale crowdsourcing scenarios. Therefore, we propose a simple but effective quality control model in crowdsourcing, which can further eliminate the collusion behavior at the same time.
\section{Overview of our framework}\label{sec:overview}
According to the analysis in Section \ref{Introduction}, achieving quality control without verification will lead to the problems of {\it quality misreport} and {\it collusion}. Both problems originate from that the worker's capability  to complete tasks is his private information in the absence of quality verification. Without this information, the optimal strategy of the requestor is uncertain, and so is that of any worker. To optimize the incomplete-information game between the requestor and workers, we take advantage of the mechanism design game theory \cite{borgers2015introduction}.

The mechanism design game theory is an efficient vehicle to solve the game with private information, which introduces two kinds of players, namely ``the agent" and ``the principal". The agent  has private information called ``type" while the principle does not have and is unable to access the private information of the agents.   To achieve the Pareto optimality, the mechanism design game theory allows the principal to dominate the game with the agents through designing a mechanism (game rule),  which leads the agents to act the way the principal would like. Specifically, the principal asks the agents to report their types (called \emph{direct mechanism}) or strategies made based on the private information (called \emph{indirect mechanism}).\footnote{According to the revelation principle, for every Bayesian Nash equilibrium,  there exists a Bayesian game with the same equilibrium outcome but in which players truthfully report types, i.e., both the direct mechanism and indirect mechanism produce the same results. In this way, the principal only needs to consider the information provided by the agents to develop the optimal strategy, so as to significantly reduce the complexity.} Then the principal and the agents act in light of the mechanism the principal designed and gain the corresponding payoffs. Whether an agent reports the true type or strategy based on the true private information depends on if the mechanism formulated by the principal satisfies the incentive compatibility (\textbf{IC}) constraint. The \textbf{IC} constraint guarantees that the agents are motivated to behave in a manner consistent with the principal's optimal strategy.

In our scenario, since workers have private information, they are agents whose  types are  their working capabilities. Correspondingly, the requestor is the principal who can design  a mechanism $\mathbb{G}$  with the \textbf{IC} constraint to force any worker to report the task quality based on his working capability truthfully, thus solving the problem of quality misreport. Additionally, the collusion-proof constraint is also included in the proposed mechanism $\mathbb{G}$ to address the issue that multiple malicious workers collude to deceive the requestor. In this paper, we assume that malicious workers are rational and intelligent. %To collude well without any intermediaries, they  employ the smart contrasct to negotiate in that the smart contract is safe, confidential and speedy \cite{borgers2015introduction}.
\begin{figure}[]
	\centering
	\includegraphics[width=0.5\textwidth]{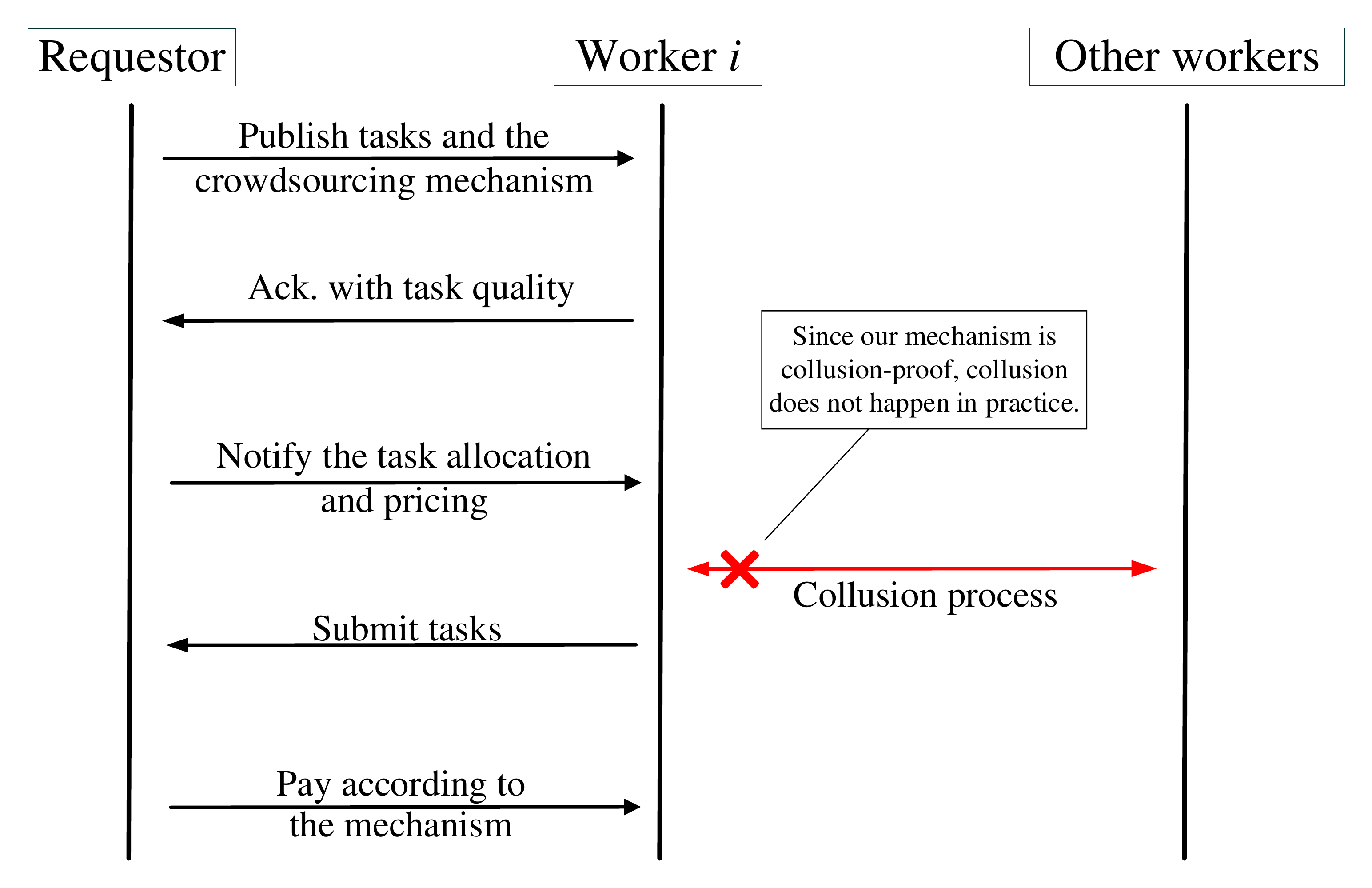}
	\caption{Overview of our framework.}
	\label{framework}
\end{figure}
The overview of our  framework is shown in Fig.\ref{framework}, which includes the following steps:
\begin{enumerate}
	\item The requestor publishes the tasks and the mechanism $\mathbb{G}$ online.
	\item If any candidate worker  $i$ accepts the tasks and the mechanism $\mathbb{G}$, he acknowledges  with his task quality to the requestor. Otherwise, he just ignores the messages from the requestor.
	\item Once receiving feedback from workers,  the requestor notifies them of the task allocation and pricing results.
	\item If the malicious workers decide to collude, they will negotiate the collusion process with each other. However, due to our collusion-proof mechanism, their optimal strategy is  not to collude.
	\item After finishing the tasks, the worker submits them to the requestor according to the requirements of $\mathbb{G}$.
	\item Accordingly, the requestor pays each worker in light of  the quantity and quality of the submitted tasks.
\end{enumerate}

%Based on Fig.\ref{framework}, the mechanism $\mathbb{G}$ is the key of our crowdsourcing framework. Hence,
We will detail our mechanism for different scenarios in the following two sections.% \ref{Collusion-proof Mechanism} and \ref{generalization}.
\section{Mechanism for the two-worker model}\label{sec:two}
In this paper, we conduct the analysis following a principle of simple to complex. We first consider the scenario where there is one requestor, denoted by $\mathbb{R}$, and two workers, denoted by $\mathbb{W}_i~ (i\in\{1,2\})$, in crowdsourcing.
In our model, affected by task difficulty, equipment characteristics, personal capability and environment\cite{peng2018data}, we consider any worker $\mathbb{W}_i$ completes tasks with quality $x_i \in \{\bar{x},\underline{x}\}$, with $\bar{x}$ and $\underline{x}$ respectively representing high and low task qualities.
Since it is difficult or costly for workers to change task qualities by adjusting their own capabilities or other factors such as the environment in the crowdsourcing process, the value of $x_i~ (i\in\{1,2\})$ is fixed. That is, a high-quality worker cannot complete low-quality tasks, and a low-quality worker is also unable to submit tasks with high quality.
We denote the probability that a worker submits high quality tasks with $p(\bar{x})=p$, and then $p(\underline{x})=1-p$. We assume the joint probability of task qualities submitted by the two workers as $p(x_1,x_2)$. %, where $(x_1,x_2)\in \{\bar{x},\underline{x}\}^2$.
For simplicity, let $p(\bar{x},\bar{x})=p_1,~ p(\bar{x},\underline{x})=p(\underline{x},\bar{x})=\frac{1}{2}p_2,~ p(\underline{x},\underline{x})=p_3$, where $p(\bar{x},\underline{x})=p(\underline{x},\bar{x})$. Apparently, $p_1=p^2$, $p_2=2p(1-p)$, $p_3={(1-p)}^2$.

To take advantage of the market power to deter misreport and collusion, the requestor regulates the behavior of workers through pricing and task allocation, which are determined according to different combinations of task qualities claimed by the two workers.   To make notations simple, we denote the cases where both workers claim $\bar{x}$, one claims $\bar{x}$ but the other claims $\underline{x}$, and both claim $\underline{x}$ as $X_1, X_2$, and $X_3$, respectively. In case $X_j~(j\in \{1,2,3\})$,
any $\mathbb{W}_i$ can obtain the payment  $u_{ij}=n_j\alpha(x_i)+t_j$  from the requestor at the working cost $F(n_j,x_i)$, where $n_j$ and  $t_j$ are respectively the number of tasks allocated to each worker and the extra reward used to incentivize workers to participate in crowdsourcing honestly;   $\alpha(\cdot)$ is the unit payoff for each task, which is a function of task quality.  In addition, $F(\cdot)$ is the cost function related to the task quality $x_i$ and the number of tasks $n_j$ which is the common knowledge.

Based on the prior beliefs, the requestor needs to propose a crowdsourcing mechanism $\mathbb{G}$ to the workers, which maps any set of quality distribution $\{p,\bar{x},\underline{x}\}$ into a set $\{n_j,t_j\}~(j\in\{1,2,3\})$. The aim of $\mathbb{G}$ is to maximize the expected utility of the requestor. That is,
\begin{equation}\label{eq:5}
	\max \sum_{j=1}^3p_j(R_j-\sum_{i=1}^2u_{ij}).
\end{equation}
In \eqref{eq:5}, $R_j$ is the payoff  of the requestor   obtained    from the submissions of two workers in case $X_j~(j\in\{1,2,3\})$.

Additionally, to realize the misreport-  and collusion-proof crowdsourcing, $\mathbb{G}$ should include the following constraints.

\subsection{Participation-incentive and \textbf{IC} constraints}\label{No-Colluison Mechanism}
In order to improve the completion rate and quality of tasks, the mechanism $\mathbb{G}$ should make workers willing to participate in the crowdsourcing and report the task quality truthfully. That is, the participation-incentive and \textbf{IC} constraints should be satisfied in the mechanism $\mathbb{G}$.  In the following, we first formulate the participation-incentive constraint and then \textbf{IC} one, where $\alpha(x_i)=x_i$ for simplicity\footnote{Our methodology can be applied to other forms of payoff function $\alpha(\cdot)$. The aim of simplifying $\alpha(\cdot)$ here is to make readers understand our design easily.}.

The main idea of the participation-incentive constraint is to guarantee  that a worker has no loss  when he participates in crowdsourcing. Specifically, if a worker is with  high quality, the other worker can be either high-quality or low-quality, corresponding to two cases $X_1$ and $X_2$. Hence, the expected utility of a high-quality worker consists of two parts: utility in $X_1$ with  probability $p_1$ and that in $X_2$ with probability $\frac{1}{2}p_2$. In order to motivate workers to participate in crowdsourcing, it is necessary to guarantee their expected utility no less than 0, that is,
\begin{equation}\label{eq:3}
	p_1(n_1\bar{x}+t_1-F(n_1,\bar{x}))+\frac{1}{2}p_2(n_2\bar{x}+t_2-F(n_2,\bar{x}))\geq 0.
\end{equation}

Similarly, for a low-quality worker, the participation-incentive constraint is
\begin{equation}\label{eq:4}
	\frac{1}{2}p_2(n_2\underline{x}+t_2-F(n_2,\underline{x}))+p_3(n_3\underline{x}+t_3-F(n_3,\underline{x}))\geq 0. \end{equation}

Next, we introduce the \textbf{IC} constraint. As mentioned above, the flexibility in the task number and extra reward determined by the requestor can  be used to deter any worker to lie. If a worker doesn't report his task quality truthfully, the mechanism is designed to make him cannot get the utility as  expected. That is to say, this mechanism makes the utility of an honest worker not less than that of a worker who misrepresents task quality. We take a high-quality worker as an example. When he reports truthfully, his expected utility is the linear combination of the utility in case $X_1$ and that in case $X_2$. When he lies about task quality, the utility in  case $X_1$ turns into that in case $X_2$ while the utility in  case $X_2$ becomes that in $X_3$. Hence, the \textbf{IC} constraint of a high-quality worker can be written as
\begin{equation}\label{eq:1}
	\begin{split}
		&p_1(n_1\bar{x}+t_1-F(n_1,\bar{x}))+\frac{1}{2}p_2(n_2\bar{x}+t_2-F(n_2,\bar{x}))\geq \\
		&p_1(n_2\underline{x}+t_2-F(n_2,\bar{x}))+\frac{1}{2}p_2(n_3\underline{x}+t_3-F(n_3,\bar{x})).
	\end{split}
\end{equation}
Similarly, for a low-quality worker, the \textbf{IC} constraint is
\begin{equation}\label{eq:2}
	\begin{split}
		&\frac{1}{2}p_2(n_2\underline{x}+t_2-F(n_2,\underline{x}))+p_3(n_3\underline{x}+t_3-F(n_3,\underline{x}))\geq \\
		&\frac{1}{2}p_2(n_1\bar{x}+t_1-F(n_1,\underline{x}))+p_3(n_2\bar{x}+t_2-F(n_2,\underline{x})).
	\end{split}
\end{equation}

It is worth noting that the \textbf{IC} constraint for the high-quality worker presented in \eqref{eq:1} is indispensable even it seems very unlikely that a high-quality worker will pretend to be low-quality.
In fact, as a rational and utility-driven player, once the high-quality worker finds it more profitable to be low-quality, he will definitely lie about his quality level to pursue utility maximization. This phenomenon is highly possible to happen when there is no guarantee of \textbf{IC} constraint for the high-quality worker. Because in this case, the requestor will put more consideration on how to avoid the problem of low-quality worker's lying, that is, how to make the honest low-quality workers gain more payoff than the dishonest ones. And when the probability of high-quality submission $p(\bar{x})$ is small, in order to prevent the low-quality worker from lying, the mechanism designed by the requestor will inevitably enforce greater number of tasks and larger amount of extra reward on the low-quality worker, which will in turn result in an unexpected situation where the high-quality worker will get more profit when lying. For example,  when $\bar{x}=9,\underline{x}=1$ and $n_1=1,n_2=4,n_3=3$, the optimal value of $t_1, t_2$ and $t_3$ can be expressed as $1.235$, $-3.32p+99.35$ and $-210.579p+143.76$ through linear regression\footnote{Because any arbitrary set of $\{n_1,n_2,n_3\}$  responds to an optimal set of $\{t_1, t_2,t_3\}$ maximizing the utility of the requestor with corresponding constraints, we directly set the values of $\{n_1,n_2,n_3\}$  here for  simplicity. In addition, the optimal set of $\{t_1, t_2,t_3\}$  here is solved with the participation-incentive constraint, \textbf{IC} constraint for the the low-quality worker and the collusion-proof constraint which we will discuss later, but without the \textbf{IC} constraint for the high-quality worker to identify the necessity of it.}. Substituting these expressions into \eqref{eq:1}, we can find that when $p<0.4$,  the high-quality worker will choose to lie, which is obviously unexpected in practice. Thus, to eliminate this problem, we have to include the \textbf{IC} constraint for the high-quality worker as shown in \eqref{eq:1}.

\subsection{Collusion-proof constraint}
Generally, a common purpose of the two workers is to maximize their total expected utility, i.e.,
\begin{equation}\label{eq:6}
	\max \sum_{j=1}^3\sum_{i=1}^2p_j(u_{ij}-F(n_j,x_i)).
\end{equation}

When the two workers are both high-quality, they obviously have no need to behave maliciously. While if both are  low-quality, the \textbf{IC} constraint makes no room for them to pretend to be high-quality so that the requestor has to pay based on the low quality no matter who completes the tasks. In other words, there is no chance to collude when they are both low-quality workers.  Hence, the collusion will only occur when the task  qualities of the two workers are different, where the low-quality worker completes  a part of or even the whole task which is supposed to be finished by the high-quality worker, so that the low-quality worker can earn more reward while the high-quality one is able to reduce the working cost. Such a malicious behavior cannot be prevented by the \textbf{IC} constraint, pressing a need to design the collusion-proof constraint for the mechanism $\mathbb{G}$, which is detailed as follows.

Without collusion, the total utility of the workers is $n_2(\bar{x}+\underline{x})+2t_2-F(n_2,\bar{x})-F(n_2,\underline{x})$. If  the high-quality worker colludes with the low-quality one through assigning $k$ units of task  to the low-quality worker to complete, their total utility would become $n_2(\bar{x}+\underline{x})+2t_2-F(n_2-k,\bar{x})-F(n_2+k,\underline{x})~(k\in \{0,1,\cdots,n_2\}$). In order to prevent collusion, our mechanism $\mathbb{G}$ should be designed to make sure that their total utility without collusion will never be less than that with collusion. In other words, the following inequation should be satisfied
\begin{align}\label{eq:7}
	F(n_2,\bar{x})+F(n_2,\underline{x}) & \leq
	F(n_2-k,\bar{x})+F(n_2+k,\underline{x}),\\ & k\in \{0,1,\cdots,n_2\}. \notag
\end{align}

For ease of calculation and analysis, let $F(n,x)=f(n)x$. As $f(n)x$ is the cost required for a worker to complete tasks, the function $f(\cdot)$ should be an non-negative and non-decreasing function in $[n_2-k,n_2+k]$.%k\in \{0,1,\cdots,n_2\}$.

\begin{proposition}
	The  mechanism  $\mathbb{G}$ entails that if $f(\cdot)$ is a monotonically increasing function, i.e., $f(\cdot)'>0$, in $[n_2-k,n_2+k]$, it is  also a concave one, i.e., $f(\cdot)''>0$.
\end{proposition}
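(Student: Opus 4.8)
The plan is to take the collusion-proof requirement \eqref{eq:7} as the given hypothesis (it is what ``the mechanism $\mathbb{G}$ entails'') and to push the multiplicative cost form $F(n,x)=f(n)x$ through it until only the increments of $f$ remain; the quality gap $\bar{x}>\underline{x}$ will then be exactly what converts the inequality into a statement about the second derivative.

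First I would substitute $F(n,x)=f(n)x$ into \eqref{eq:7} to get
\[
f(n_2)(\bar{x}+\underline{x})\le f(n_2-k)\bar{x}+f(n_2+k)\underline{x},
\]
and then collect the $\bar{x}$-terms on one side and the $\underline{x}$-terms on the other, which yields
\[
\bar{x}\,[\,f(n_2)-f(n_2-k)\,]\le \underline{x}\,[\,f(n_2+k)-f(n_2)\,].
\]
Since $f$ is non-decreasing on $[n_2-k,n_2+k]$ the two bracketed increments are non-negative, and since $f'>0$ they are strictly positive for every integer $k\ge 1$. I can therefore divide by the forward increment $f(n_2+k)-f(n_2)>0$ and read off
\[
\frac{f(n_2)-f(n_2-k)}{f(n_2+k)-f(n_2)}\le \frac{\underline{x}}{\bar{x}}<1 .
\]

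The strict bound $\underline{x}/\bar{x}<1$ is the crux: it forces the backward increment to be strictly smaller than the forward increment, i.e.
\[
f(n_2-k)+f(n_2+k)>2f(n_2),
\]
which is precisely the strict second-difference (midpoint) inequality centred at $n_2$. Invoking the standard equivalence between a strictly positive second difference on $[n_2-k,n_2+k]$ and a strictly positive second derivative of a twice-differentiable $f$ then gives $f''(\cdot)>0$ on that interval, which is the claimed conclusion.

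The main obstacle is that both hypotheses must be used simultaneously and in the right roles: monotonicity $f'>0$ only guarantees that the increments are non-negative (and non-zero), so by itself it would at best deliver $f''\ge 0$; it is the strict ordering $\bar{x}>\underline{x}$ of the two quality levels that sharpens the midpoint inequality to a strict one and hence upgrades the conclusion to $f''>0$. A secondary point to handle with care is that $k$ is a discrete task count, so the rigorous statement is the second-difference inequality for integer steps, with ``$f''>0$'' understood as its smooth counterpart on $[n_2-k,n_2+k]$; conflating the two (for instance by letting $k\to 0$) should be avoided, since the continuous reading of the constraint at $k=0$ is not what the mechanism actually imposes.
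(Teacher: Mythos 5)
Your proposal follows essentially the same route as the paper's proof: substitute $F(n,x)=f(n)x$ into \eqref{eq:7}, rearrange into the increment inequality $(f(n_2)-f(n_2-k))\bar{x}\le (f(n_2+k)-f(n_2))\underline{x}$, divide by the strictly positive forward increment to obtain the ratio bound $\le \underline{x}/\bar{x}$, and conclude $f''>0$. The only difference is that you make explicit the final step the paper dismisses as ``obvious'' --- namely that $\underline{x}/\bar{x}<1$ yields the strict midpoint inequality $f(n_2-k)+f(n_2+k)>2f(n_2)$ --- which is a welcome clarification rather than a departure.
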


\begin{proof}
	According to \eqref{eq:7},
	\begin{equation}\label{eq:8}
		\begin{split}
			f(n_2)\bar{x}+f(n_2)\underline{x}\leq f(n_2-k)\bar{x}+f(n_2+k)\underline{x}\\
			\Rightarrow (f(n_2)-f(n_2-k))\bar{x}\leq (f(n_2+k)-f(n_2))\underline{x} \\
			k\in \{0,1,\cdots,n_2\}.
		\end{split}
	\end{equation}
	If $f(\cdot)'>0$, that is, $f(\cdot)$ is monotonically increasing, then
	\begin{equation}\label{eq:9}
		\frac{f(n_2)-f(n_2-k)}{f(n_2+k)-f(n_2)}\leq\underline{x}/\bar{x},~k\in \{0,1,\cdots,n_2\}.
	\end{equation}
	Obviously, $f(\cdot)$ is a concave function, i.e., $f(\cdot)''>0$.
\end{proof}
Under the assumption of $f(\cdot)'>0$, we can see that the larger the value of $k$ (the number of collusive tasks), the smaller the workers' cost. Therefore the workers' optimal collusion strategy is that all the tasks of the  high-quality worker are completed by the low-quality one, i.e., $k=n_2$. In order to avoid collusion, our mechanism should satisfy the following constraint
\begin{equation}\label{eq:10}
	\frac{f(n_2)-f(0)}{f(2n_2)-f(n_2)}\leq \underline{x}/\bar{x}.
\end{equation}
\eqref{eq:10} guarantees that even though the malicious workers adopt the optimal collusion strategy, the total expected utility of colluded workers cannot be larger than that when they behave honestly. Thus, collusion-proof is realized in crowdsourcing.

It should be noted that the assumption of $f(\cdot)'>0$  is consistent with most practical scenarios, where  as the number of tasks increases, the cost of workers grows. %However, in some scenarios where as the number of tasks increases, the cost of workers grows more slowly or even decreases, the proposition above is no longer applicable.
\section{Mechanism for the $m$-worker model}\label{sec:m}
In this section, we discuss the $m$-worker ($m > 2$) scenario which is a general case and can be derived from the above optimization process.

In a general model, there is one requestor $\mathbb{R}$ and $m$ workers $\mathbb{W}_1,\mathbb{W}_2,\cdots,\mathbb{W}_m$, whose task qualities are denoted as $x_1,x_2, \cdots, x_m$. We use case $X_j$ $(j\in \{1, 2, \cdots, m, m+1\})$ to represent the situation where there are $m+1-j$ high-quality workers among the $m$  workers. $p_j$ is the probability of case $X_j$ and it is easy to know that $p_j=\mathbb{C}_m^{m+1-j}p^{m+1-j}(1-p)^{j-1}$ due to $p(\bar{x})=p$, where $\mathbb{C}_m^i$ is the combination calculation $\mathbb{C}_m^i=\frac{i!}{m!(m-i)!}$. The mechanism $\mathbb{G}$ proposed by the requestor maps any set of quality distribution $\{p,\bar{x},\underline{x}\}$ into a binary group $\{n_j,t_j\}$, where $n_j$ and $t_j$ are the task number and extra reward in case $X_j$ $(j\in \{1, 2, \cdots, m, m+1\})$. In the $j^{th}$ case, $\mathbb{W}_i$ spends $F(n_j,x_i)$ and earns $u_{ij}=n_jx_i+t_j$ from the requestor.
%We summarize key parameters in Table \ref{tab:2}.
Similar to Section \ref {sec:two}, the aim of $\mathbb{G}$ is to maximize the requestor's expected utility. That is,
\begin{equation}\label{eq:m}
	\max \sum_{j=1}^{m+1}p_j(R_j-\sum_{i=1}^{m}u_{ij}).
\end{equation}
In \eqref{eq:m}, $R_j$ is the payoff of the requestor obtained from the submissions of all workers in case $X_j$ and $u_{ij}$ is the payment to worker $i$ in case $X_j$ $(j\in \{1, 2, \cdots, m, m+1\})$.

To realize the misreport-  and collusion-proof crowdsourcing, $\mathbb{G}$ should meet the \textbf{IC}, the participation-incentive and the collusion-proof constraints, which are described in the following.

For any high-quality worker, his expected utility is related to the number of other high-quality workers in crowdsourcing. In detail, when there are total $m-i$ high-quality workers including himself, case $X_i$ occurs with the probability of $\mathbb{C}_{m-1}^{m-i}p^{m+1-i}(1-p)^{i-1}=\frac{\mathbb{C}_{m-1}^{m-i}}{\mathbb{C}_{m}^{m+1-i}}p_i$. Thus, in case $X_i$ , the expected utility of this high-quality worker is $\frac{\mathbb{C}_{m-1}^{m+1-i}}{\mathbb{C}_{m}^{m+1-i}}p_i(n_i\underline{x}+t_i-F(n_i,\underline{x}))$. However, when the high-quality worker feigns a low-quality one, the requestor will consider there are $m-(i+1)$ high-quality workers, and hence, the number of tasks and extra reward to this worker are respectively $n_{i+1}$ and $t_{i+1}$. Thus,  the expected utility of this high-quality worker turns to be $\frac{\mathbb{C}_{m-1}^{m-i}}{\mathbb{C}_{m}^{m+1-i}}p_i(n_{i+1}\underline{x}+t_{i+1}-F(n_{i+1},\bar{x}))$.
Considering $i\in \{1,2,\dots,m\}$, the \textbf{IC}  and the participation-incentive constraints can be formulated by \eqref{eq:19} and \eqref{eq:20} as follows, implying that the expected utility of a high-quality worker when he honestly participates   in crowdsourcing is not less than that when he misreports the task quality to the requestor or even does not join in the crowdsourcing.
\begin{equation}\label{eq:19}
	\begin{split}
		\sum_{i=1}^m\frac{\mathbb{C}_{m-1}^{m-i}}{\mathbb{C}_{m}^{m+1-i}}p_i(n_i\bar{x}+t_i-F(n_i,\bar{x}))\geq \\ \sum_{i=1}^m\frac{\mathbb{C}_{m-1}^{m-i}}{\mathbb{C}_{m}^{m+1-i}}p_i(n_{i+1}\underline{x}+t_{i+1}-F(n_{i+1},\bar{x})),
	\end{split}
\end{equation}
\begin{equation}\label{eq:20}
	\sum_{i=1}^m\frac{\mathbb{C}_{m-1}^{m-i}}{\mathbb{C}_{m}^{m+1-i}}p_i(n_i\bar{x}+t_i-F(n_i,\bar{x}))\geq 0.
\end{equation}

Similarly, for a low-quality worker, his \textbf{IC} and participation-incentive constraints are respectively written as
\begin{equation}\label{eq:21}
	\begin{split}
		\sum_{i=2}^{m+1}\frac{\mathbb{C}_{m-1}^{m+1-i}}{\mathbb{C}_{m}^{m+1-i}}p_i(n_i\underline{x}+t_i-F(n_i,\underline{x}))\geq \\ \sum_{i=2}^{m+1}\frac{\mathbb{C}_{m-1}^{m+1-i}}{\mathbb{C}_{m}^{m+1-i}}p_i(n_{i-1}\bar{x}+t_{i-1}-F(n_{i-1},\bar{x})),
	\end{split}
\end{equation}
\begin{equation}\label{eq:22}
	\sum_{i=2}^{m+1}\frac{\mathbb{C}_{m-1}^{m+1-i}}{\mathbb{C}_{m}^{m+1-i}}p_i(n_i\underline{x}+t_i-F(n_i,\underline{x}))\geq 0.
\end{equation}

Next, we will analyze the issue of collusion. As in the case of two workers, collusion does not occur when the qualities of all workers are the same, so we only consider the collusion in the case where the workers' task qualities are different. Take three workers as an example, we can find that collusion may occur in two cases, namely the case that there are 2 high-quality workers represented by $\{\bar{x},\bar{x},\underline{x}\}$, as well as the case that there is 1 high-quality worker represented by $\{\bar{x},\underline{x},\underline{x}\}$\footnote{$\{\bar{x},\bar{x},\underline{x}\}$ does not indicate the task quality of any specific worker is low or high but represents all the scenarios where there are 2 high-quality workers and 1 low-quality one. So does $\{\bar{x},\underline{x},\underline{x}\}$.}. In the case of $\{\bar{x},\bar{x},\underline{x}\}$, the total utility of all workers without collusion is $n_2(2\bar{x}+\underline{x})+3t_2-2F(n_2,\bar{x})-F(n_2,\underline{x})$. Assume that $\mathbb{W}_1$ and $\mathbb{W}_2$ are both the high-quality workers while $\mathbb{W}_3$ is the low-quality one, we can find that the costs of  $\mathbb{W}_1$ and $\mathbb{W}_2$ are both $F(n_2,\bar{x})$ while $\mathbb{W}_3$'s cost is $F(n_2,\underline{x})$.
When $\mathbb{W}_1$ and $\mathbb{W}_2$ respectively require $\mathbb{W}_3$ to complete $k_1 \in[0,n_2]$ and $k_2\in[0,n_2]$ tasks, their costs change to $F(n_2-k_1,\bar{x})+$ collusion expense\footnote{The collusion expense is the reward the high-quality worker paid to the low-quality one in order to facilitate collusion which including the low-quality worker's cost of completing the collusion tasks and the collusion bribe.}, $F(n_2-k_2,\bar{x})+$ collusion expense, and $F(n_2,\underline{x})+F(n_2,\underline{x})+F(n_2,\underline{x})$ respectively. The workers' total utility after collusion is $n_2(2\bar{x}+\underline{x})+3t_2-F(n_2-k_1,\bar{x})-F(n_2-k_2,\bar{x})-F(n_2+k_1+k_2,\underline{x})$. Through comparing the total utility before and after collusion, the collusion-proof constraint can be written as
\begin{equation}\label{eq:15}
	\begin{split}
		2&F(n_2,\bar{x})+F(n_2,\underline{x})\leq \\ F(n_2-k_1,\bar{x})+&F(n_2-k_2,\bar{x})+F(n_2+k_1+k_2,\underline{x}), \\
		&k_1,k_2 \in \{0,1,\cdots,n_2\}.
	\end{split}
\end{equation}

Likewise, in the case of $\{\bar{x},\underline{x},\underline{x}\}$, if there is no collusion, the cost of $\mathbb{W}_1$ is $F(n_3,\bar{x})$, while those of $\mathbb{W}_2$ and $\mathbb{W}_3$ are both $F(n_3,\underline{x})$. When $\mathbb{W}_1$ requires $\mathbb{W}_2$ and $\mathbb{W}_3$ to complete $k_1$ and $k_2$ tasks respectively ($k_1+k_2 \in \{0,1,\cdots,n_3\}$), their costs change to $F(n_3-k_1-k_2,\bar{x})+$ collusion expense, $F(n_3,\underline{x})+F(k_1,\underline{x})$ and $F(n_3,\underline{x})+F(k_2,\underline{x})$. The collusion-proof constraint can be written as
\begin{equation}\label{eq:16}
	\begin{split}
		&F(n_3,\bar{x})+2F(n_3,\underline{x})\leq \\ F(n_3-k_1-k_2,&\bar{x})+F(n_3+k_1,\underline{x})+F(n_3+k_2,\underline{x}), \\
		&k_1+k_2\in \{0,1,\cdots,n_3\}.
	\end{split}
\end{equation}
\eqref{eq:15} and \eqref{eq:16} guarantee that  the total utility of all workers when they collude will not be greater than that when they behave honestly in the above two cases, so that the workers have no motivation to collude.

With the same assumption $F(n,x) = f(n)x$ mentioned in Section \ref{sec:two}, when $f(\cdot)$ is monotonically increasing, $f(\cdot)''\geq 0$ can be deduced with the similar method used above. In the  case of $\{\bar{x},\bar{x},\underline{x}\}$, when collusion occurs, it is obvious that the total cost of the workers (i.e., the right part of the inequality) minimizes at $k_1=k_2=n_2$, because the cost saved by collusion is the most. And in the case of $\{\bar{x},\underline{x},\underline{x}\}$, it is also obvious that the total cost minimizes at $k_1+k_2=n_3$. According to the nature of the concave function $f(x_1)+f(x_2)\geq 2f(\frac{x_1+x_2}{2})$, we have $f(n_3+k_1)+f(n_3+k_2)\geq 2f(\frac{n_3+k_1+n_3+k_2}{2})=2f(\frac{3n_3}{2})$. So in this case, the total cost minimizes at $k_1=k_2=\frac{1}{2}n_3$. Therefore, \eqref{eq:15} and \eqref{eq:16} can be transferred to
\begin{equation}\label{eq:17}
	2f(n_2)\bar{x}+f(n_2)\underline{x}\leq 2f(0)\bar{x}+f(3n_2)\underline{x},
\end{equation}
\begin{equation}\label{eq:18}
	f(n_3)\bar{x}+2f(n_3)\underline{x}\leq f(0)\bar{x}+2f(\frac{3}{2}n_3)\underline{x},
\end{equation}
which implies that the maximum utility of any worker that can be achieved by collusion is less than that without collusion. At this point, workers will give up collusion.

%%%%%%%%%%%%%%%%%%%%
For $m$ workers, there are $m-1$ cases in total will incur collusion, which could be represented by $\{\bar{x},\bar{x},\cdots,\bar{x},\underline{x}\},\{\bar{x},\bar{x},\cdots,\bar{x},\underline{x},\underline{x}\},\cdots,\{\bar{x},\underline{x},\cdots,\underline{x}\}$. And in the $i^{th}$ case, the number of high-quality workers is $(m-i)$, then the total utility of all workers in case $i$ before collusion is
\begin{equation}\label{eq:23}
	\begin{split}
		\mathbb{U}_{i}=&n_{i+1}((m-i)\bar{x}+i\underline{x})+mt_{i+1}\\
		-&(m-i)F(n_{i+1},\bar{x})-i \cdot F(n_{i+1},\underline{x}).\\
	\end{split}
\end{equation}
Based on the analysis above, we still let $F(n,x) = f(n)x$ and assume that $f(\cdot)$ is a monotonically increasing and concave function. When the workers collude, their optimal strategy is that all tasks of high-quality workers are equally assigned to low-quality workers due to the nature of concave function $\sum_i^n f(x_i)\geq nf(\frac{\sum_i^n x_i}{n})$. Thus,  the total  utility of all the malicious workers after using the optimal collusion strategy  is
\begin{equation}\label{eq:24}
	\begin{split}
		\mathbb{U}'_{i}=&n_{i+1}((m-i)\bar{x}+i\underline{x})+mt_{i+1}\\
		-&(m-i)f(0)\bar{x}-i \cdot f(n_{i+1}+\frac{m-i}{i}n_{i+1})\underline{x}.\\
	\end{split}
\end{equation}
Based on \eqref{eq:23} and \eqref{eq:24}, the collusion-proof constraint for the $i^{th}$ case is
\begin{equation}\label{eq:25}
	\begin{split}
		(m-i)f(&n_{i+1})\bar{x}+i \cdot f(n_{i+1})\underline{x}\leq \\ (m-i)f(0)&\bar{x}+i \cdot f(n_{i+1}+\frac{m-i}{i}n_{i+1})\underline{x},\\
		&i \in \{1,2,\cdots,m-1\}
	\end{split}
\end{equation}
which implies even though the malicious workers adopt the optimal collusion strategy, their total utility is still not larger than that when they do not collude. Thus, the goal of collusion-proof can be achieved in the $m$-worker model.
\section{Performance evaluation}\label{sec:performance}
In this section, we analyze the impacts of some key parameters on the performance of our proposed crowdsourcing mechanism through extensive simulations.
\subsection{Two workers}
We first analyze the situation of two workers presented in Section \ref{sec:two}. Basically, we utilize the following set of functions\footnote{We also test other function settings which present similar results. Hence, we do not present them for brevity.} to calculate \eqref{eq:5} to \eqref{eq:2}, and \eqref{eq:10}: $f(n)=2^n-1$, $R_1=2 \bar{x}\times n_1^2$, $R_2=(\bar{x}+\underline{x})\times n_2^2$, $R_3=2\underline{x}\times n_3^2$.
Considering that the actual meaning of $n_i~(i\in\{1,2,3\})$ is the number of tasks assigned to each worker under three different cases $X_i~(i\in\{1,2,3\})$, we restrict them as positive integers.

To begin with, we investigate the change laws of the requestor's optimal utility in Fig. \ref{fig:optimal_utility}. As shown in Fig. \ref{y4}, we present the optimal utility changing with different $p(\bar{x})$ and $\bar{x}$ given a fixed $\underline{x}$; in Fig. \ref{y5}, we report the change of the optimal utility with various $p(\bar{x})$ and $\underline{x}$ when $\bar{x}$ is fixed.
From these two figures, one can see that when $\bar{x}$ or $\underline{x}$ is settled, the requestor's optimal utility increases with the increasing $p(\bar{x})$. The reason is that the greater  $p(\bar{x})$ is, the more high-quality tasks the requestor receives, which finally increases the requestor's utility.
Besides, when one of $\bar{x}$ and $\underline{x}$ is fixed, the larger the difference between high and low quality $\Delta x$,  the higher increment of the requestor's optimal utility due to the change of $p(\bar{x})$.
%$\Delta \mathbb{U}_r$. %indicating the difference between the highest and the lowest utility of the requestor.
That is, when $\Delta x$ is increasing, $p(\bar{x})$ has more influence on the requestor's utility. The underlying reason is that the larger $\Delta x$ incurs the larger difference of utility brought by the high-quality worker and low-quality one, which will amplify the improvement of the optimal utility resulted from the increasing $p(\bar{x})$.
%Fig. \ref{y3} shows the optimal utility changing with different $\bar{x}$ and $\underline{x}$ when $p=0.2,~\bar{x}\in[3,15],$ and $\underline{x}\in[1,\bar{x})$, which indicates that
Last but not least, for a specific $p(\bar{x})$, one can observe that the requestor's optimal utility becomes smaller with the increasing $\bar{x}$, shown in Fig. \ref{y4}, but comes to be larger with the increasing $\underline{x}$, presented in Fig. \ref{y5}.
This is because when $\underline{x}$ is constant, the increase of $\bar{x}$ implies that collusion and misreport  turn to be more profitable, incentivizing more malicious behaviors from workers, which drives the requestor to spend more to eliminate these undesirable behaviors, leading to less utility for her. While when $\bar{x}$ is given, the larger $\underline{x}$ makes the profit of the low-quality worker's lying become trivial, dispelling his enthusiasm of behaving maliciously, which enables the increase of the requestor's optimal utility.
%the increased $\underline{x}$ and the decreased $\bar{x}$ will lead to the decrease of $\Delta x$, i.e., a smaller difference between the costs of high-quality submission and low-quality one, which further implies that the profit of collusion becomes less. Then the cost for the requestor to prevent collusion among workers turns to be less, resulting in the larger optimal utility of the requestor. What's more, the decrease of $\Delta x$ will also make the filthy lucre of workers' lying become lower, which comes to be another reason of the requestor's increasing optimal utility.

\begin{figure}[]
	\centering
	\subfigure[$\underline{x}=1$.]{
		\includegraphics[width=0.4\textwidth]{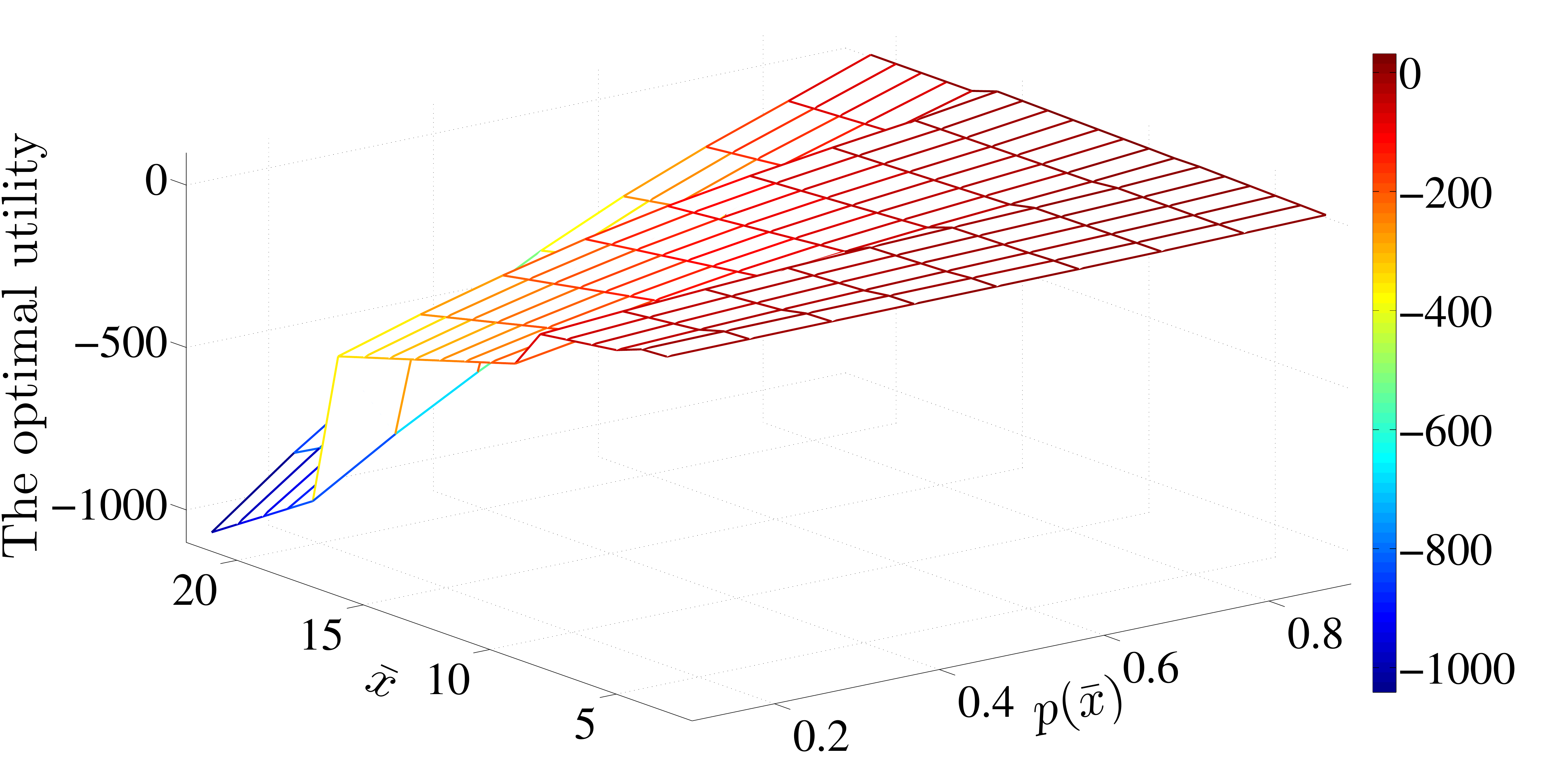}
		\label{y4}
	}
	\subfigure[$\bar{x}=21$.]{
		\includegraphics[width=0.4\textwidth]{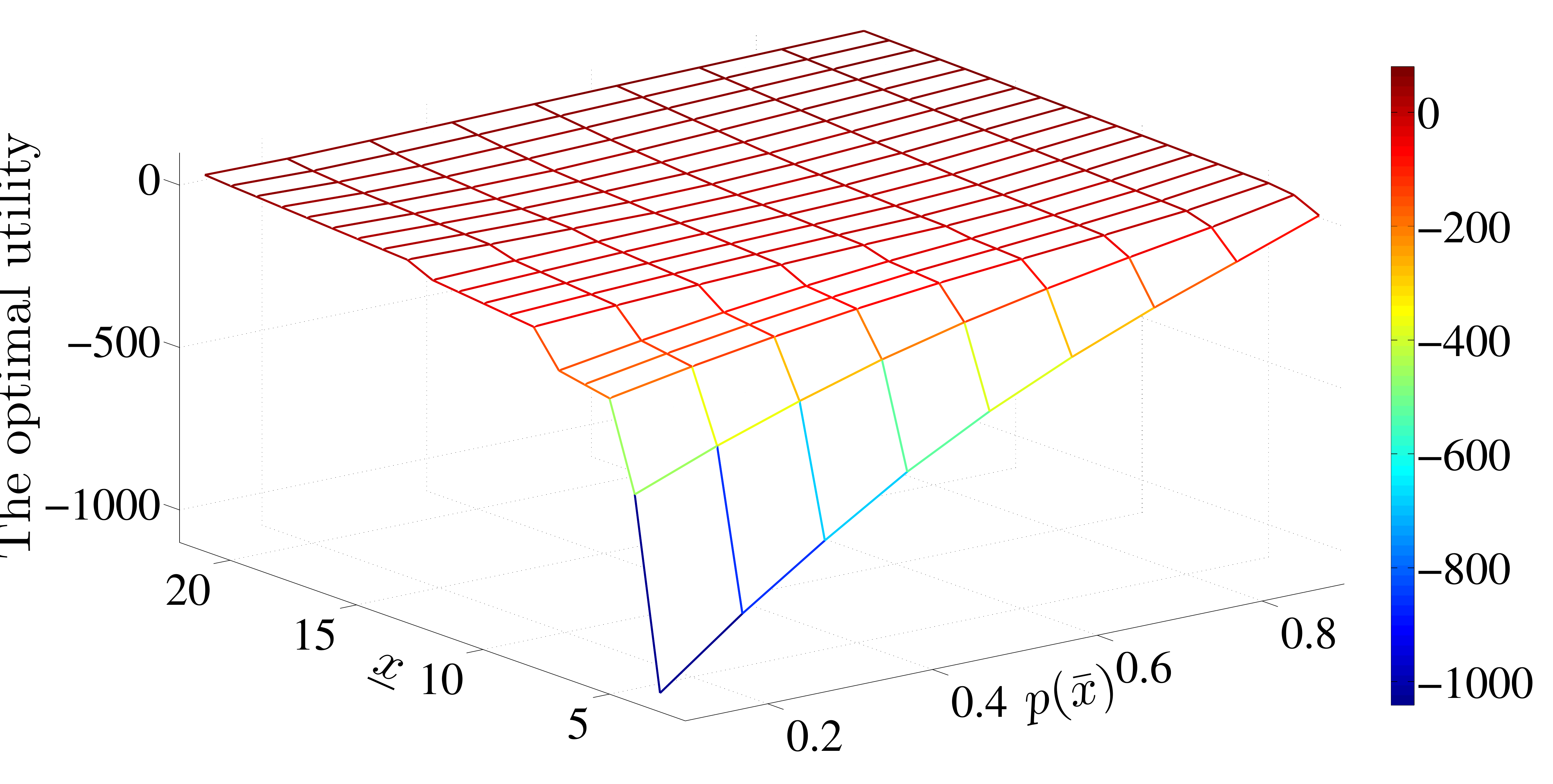}
		\label{y5}
	}
	%\subfigure[$p(\bar{x})$=0.2.]{
	%\includegraphics[width=0.3\textwidth]{y3}
	%\label{y3}
	%}
	\caption{The optimal utility of the requestor in different situations.}
	\label{fig:optimal_utility}
\end{figure}

Next, to explore the relationships between $n_i~(i \in \{1,2,3\})$ and  $\Delta x$, $p(\bar{x})$, $\bar{x}$, we solve the optimization problem with fixed $\Delta x$, $p(\bar{x})$ and $\bar{x}$, respectively, and analyze the corresponding experimental results. In TABLE \ref{nfigure}, we report the changing trend of $n_1$, $n_2$ and $n_3$ in different cases. The parameter settings in each case are clarified in this table, and the meanings of horizontal and vertical coordinates are specified in the corresponding figures.

From the first column of figures in TABLE \ref{nfigure}, we can see that the changing trend of $n_1$ has the following characteristics:
\begin{enumerate}[labelsep = .5em, leftmargin = 0pt, itemindent = 2.5em]
	\item As shown in the first two and the last two figures, the value of $n_1$ keeps stable or increasing with the increase of $p(\bar{x})$.
	%; when $\Delta x$ or $\bar{x}$ changes, the coordinates and the values of the salient points in all these figures are basically stable.
	This is because $n_1$ corresponds to the situation where both workers are of high quality, so the more likely the high-quality worker appears, the higher value of $n_1$ that the requestor needs to set to achieve the \textbf{IC} constraint for the high-quality worker and avoid any potential lie. %In this case, $n_1$ is mainly affected by $p(\bar{x})$, so the coordinates and values of the salient points generally remain stable.
	\item In the middle two figures, as $\bar{x}$ increases, $n_1$ remains stable or increases. %; and for increasing $p(\bar{x})$, the values of $\bar{x}$ at the salient points decrease  until there is no more salient point.
	The reason is that the increase of $\bar{x}$ drives the requestor to increase $n_1$ for achieving the \textbf{IC} constraint. %When $p(\bar{x})$ increases, the timing where the requestor needs to increase $n_1$ is also advanced, so $\bar{x}$ of the salient points will be reduced.
	\item In the first two figures, when $\underline{x}$ increases, one can find that $n_1$ remains unchanged or increases. To be specific, when $\Delta x$ is fixed, the increasing  $\underline{x}$ indicates the increase of $\bar{x}$, which makes the requestor enlarge $n_1$ to satisfy the \textbf{IC} constraint.
\end{enumerate}

Through comparing the second column of figures, we can find that $n_2$ changes with the following features:
\begin{enumerate}[labelsep = .5em, leftmargin = 0pt, itemindent = 2.5em]
	\item From the first two figures and the last two ones, it is obvious that $n_2$ keeps unchanged or becomes larger with the increasing $p(\bar{x})$. This is because $n_2$ represents the task allocation in the case of one high-quality worker and one low-quality worker, which affects the \textbf{IC} constraint and the collision-proof constraint at the same time. As $p(\bar{x})$ increases,  the likelihood that a worker is a  high-quality one increases, which makes the problem of high-quality worker's lying more serious. This situation is very similar to $n_1$, so the requestor needs to increase $n_2$ to guarantee the \textbf{IC} constraint.
	\item Observing the first two figures in this column, one can get the conclusion that $n_2$ owns a trend of decreasing first and then increasing with the increase of $\underline{x}$; and in all the rest of figures in this column, $n_2$ first decreases and then increases with the increasing $\Delta x$. This is because, on one hand, $n_2$ affects all constraints, making the requestor increase $n_2$ to achieve these constraints; on the other hand, the increase of $n_2$ will lead to more low-quality submissions for the requestor, lowering its utility, which is exactly opposite to the optimization direction. So $n_2$ finally shows a trend of decreasing first and then increasing. In detail, take the middle two figures as an example, with the fixed $p(\bar{x})$ and $\bar{x}$, when $\Delta x$ is small, $\underline{x}$ decreases as $\Delta x$ increases. In this case, due to the relatively small quality difference, the problems of workers' lying and collusion are not so serious, then the requestor can improve its utility through reducing the number of low-quality tasks with a lower $n_2$. When $\Delta x$ gradually increases, the problems of workers' lying and collusion become more severe, then $n_2$ comes to the minimum. When $\Delta x$ increases to the maximum, the workers' lying and collusion problem turns to be terrible, where the requestor needs to increase $n_2$ to satisfy all constraints. Therefore, $n_2$ presents a trend of decreasing first and then increasing.
\end{enumerate}

Finally, we scrutinize the third column of figures for $n_3$:
\begin{enumerate}[labelsep = .5em, leftmargin = 0pt, itemindent = 2.5em]
	\item As can be seen from all these figures, the overall change of $n_3$ is not obvious and remains generally unchanged. This is because, on the one hand, $n_3$ corresponds to the case of two low-quality workers, where increasing $n_3$ means getting more low-quality tasks, opposite to the optimization direction; on the other hand, reducing $n_3$ may result in failure to meet the \textbf{IC} constraint of the low-quality worker. These two conflicting factors are balanced in most cases, making $n_3$ relatively stable.
	\item As we can see from the first two and the last two figures, $n_3$ stays the same or decreases with the increase of $p(\bar{x})$. Because when $p(\bar{x})$ increases, the possibility that both workers are low-quality ones is reduced, so that the requestor can appropriately decrease $n_3$ to improve her utility by receiving less low-quality tasks.
	\item From the $3^{rd}$ figure with a lower $p(\bar{x})$, $n_3$ decreases first and then increases with the increasing $\Delta x$, which is because the probability of two low-quality workers is high, so the two factors mentioned in the first item cannot be balanced. To be specific, when $\Delta x$ is small, the benefit of low-quality worker's lying is low, where the cost for the requestor to meet the \textbf{IC} constraint is also small, so the requestor can lower $n_3$ to improve her utility. While when $\Delta x$ is large, the problem of the low-quality worker's lying becomes more rigorous than the low utility due to many low-quality submissions, so the requestor will increase $n_3$ to satisfy the \textbf{IC} constraint. These two aspects generally make $n_3$ decrease first and then increase.
\end{enumerate}

%%%%%%%%%%%%%%%%%%%%%%%%%%%
\begin{table*}[]
	\centering
	\caption{$n_1$, $n_2$ and $n_3$ in different situations.}
	\label{nfigure}
	\begin{tabular}{|c|c|c|c|}
		\hline
		& $n_1$ & $n_2$ & $n_3$ \\ \hline
		$\Delta x=4$ &  \begin{minipage}{0.28\textwidth}
			\includegraphics[width=5cm]{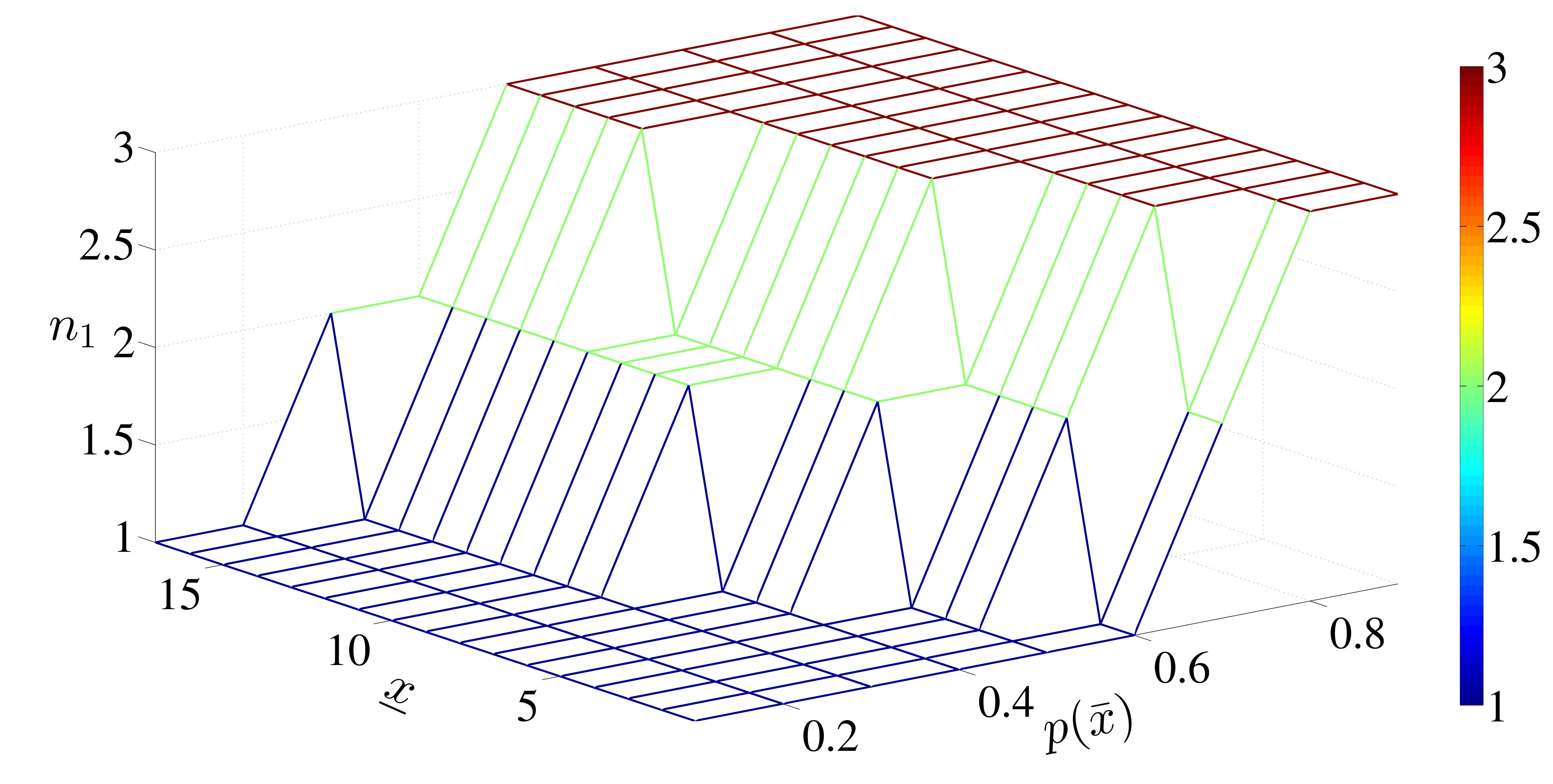}
		\end{minipage} & \begin{minipage}{0.28\textwidth}
			\includegraphics[width=5cm]{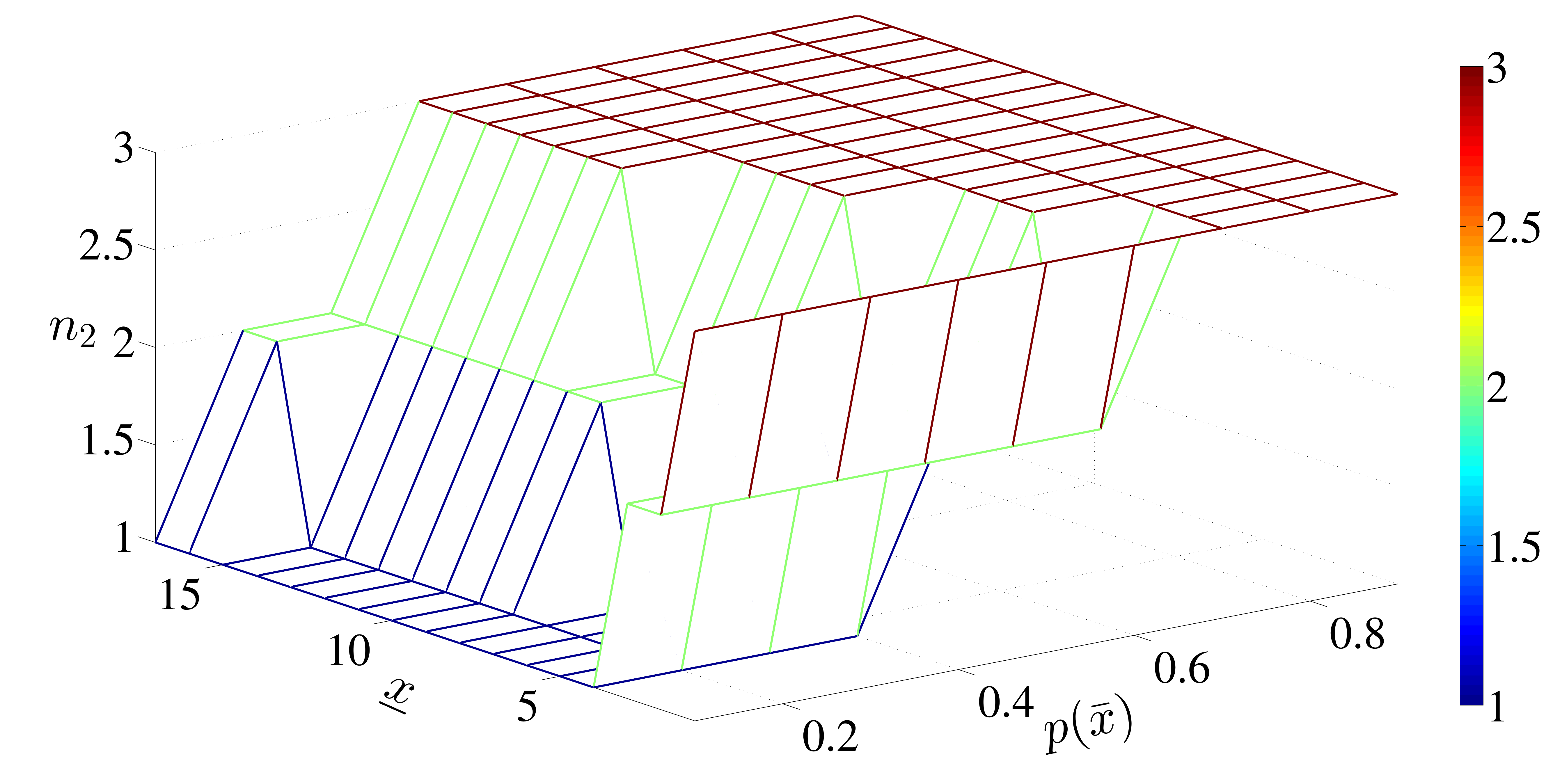}
		\end{minipage} & \begin{minipage}{0.28\textwidth}
			\includegraphics[width=5cm]{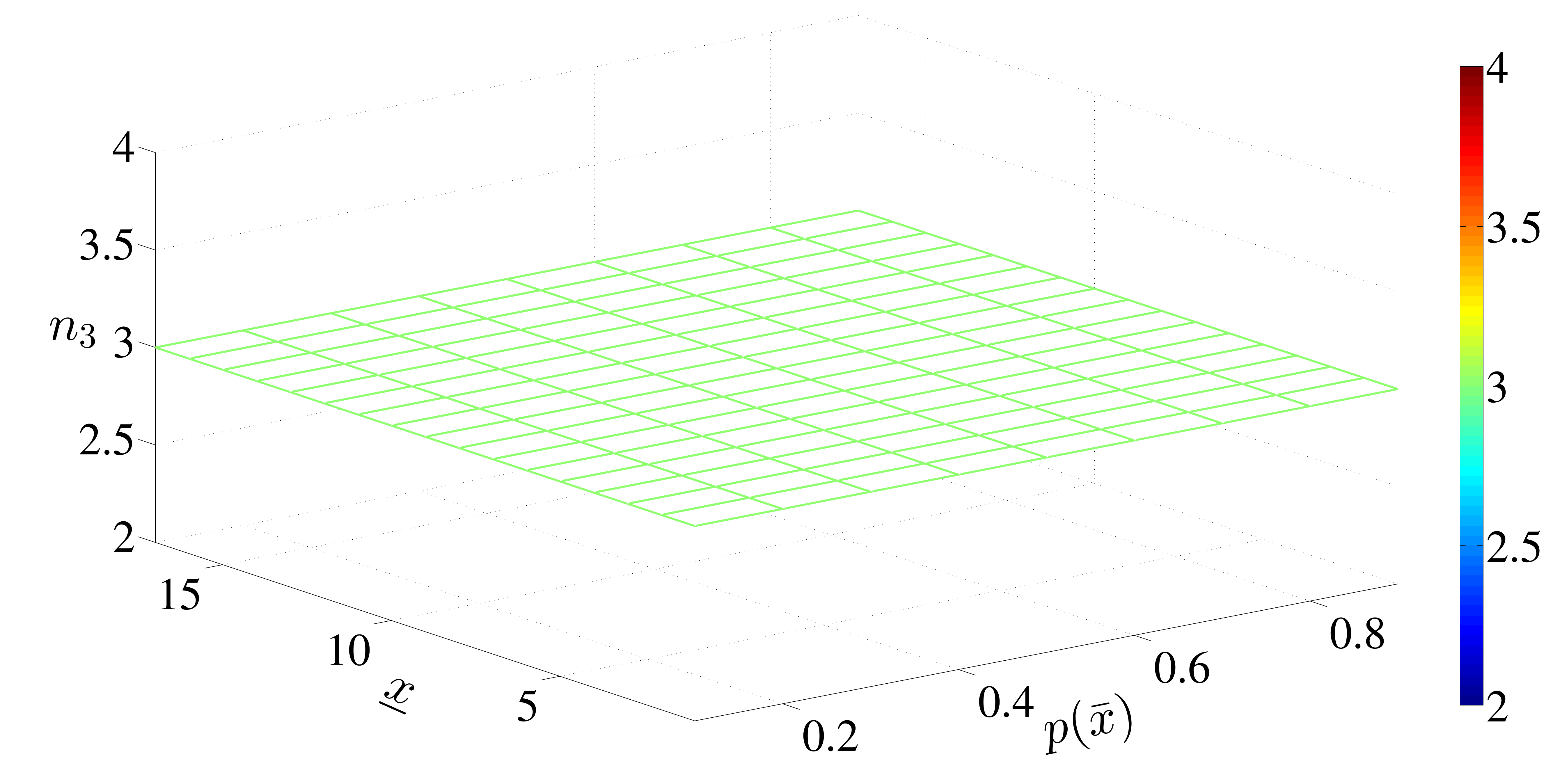}
		\end{minipage} \\ \hline
		$\Delta x=18$ & \begin{minipage}{0.28\textwidth}
			\includegraphics[width=5cm]{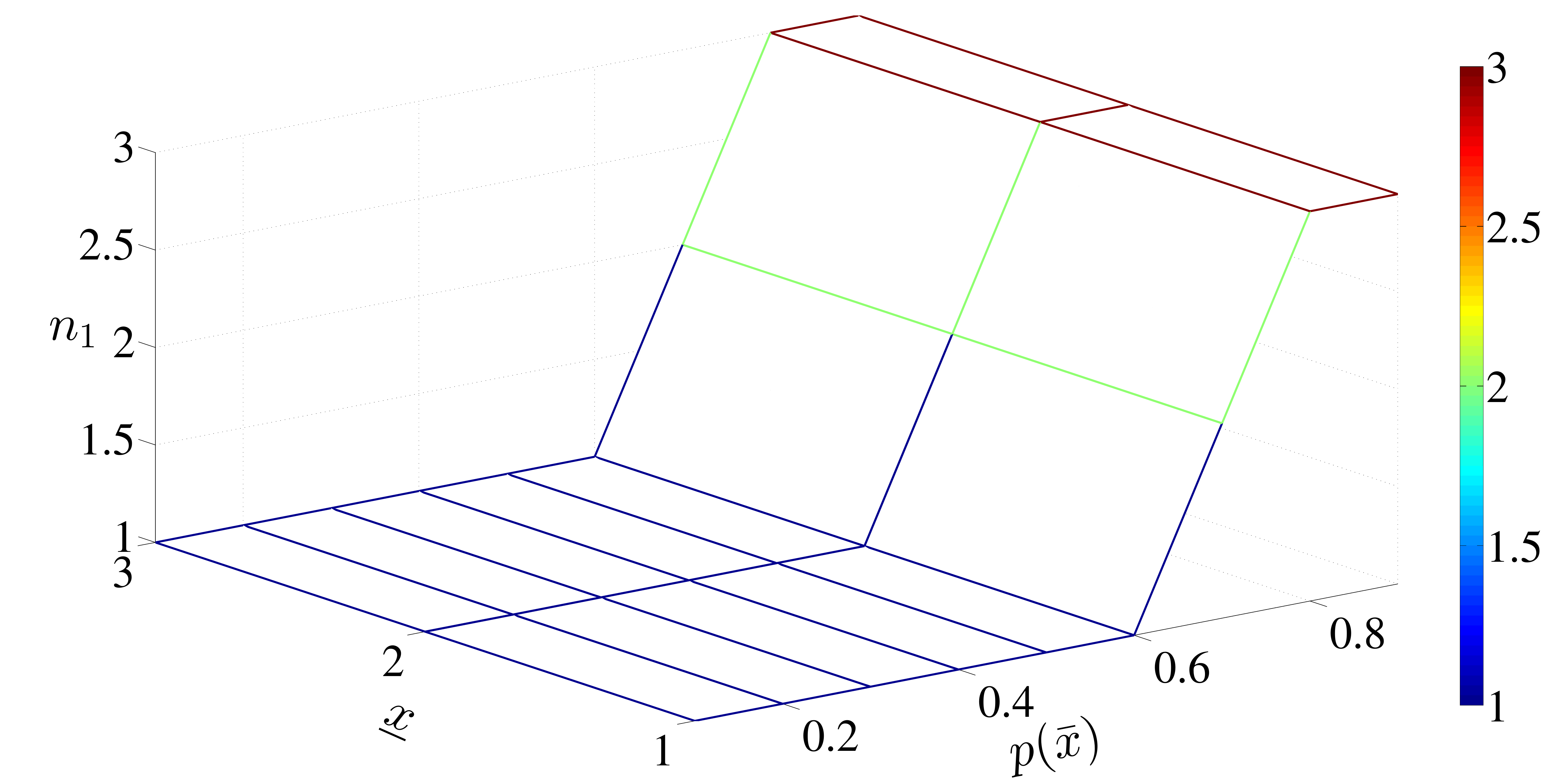}
		\end{minipage} & \begin{minipage}{0.28\textwidth}
			\includegraphics[width=5cm]{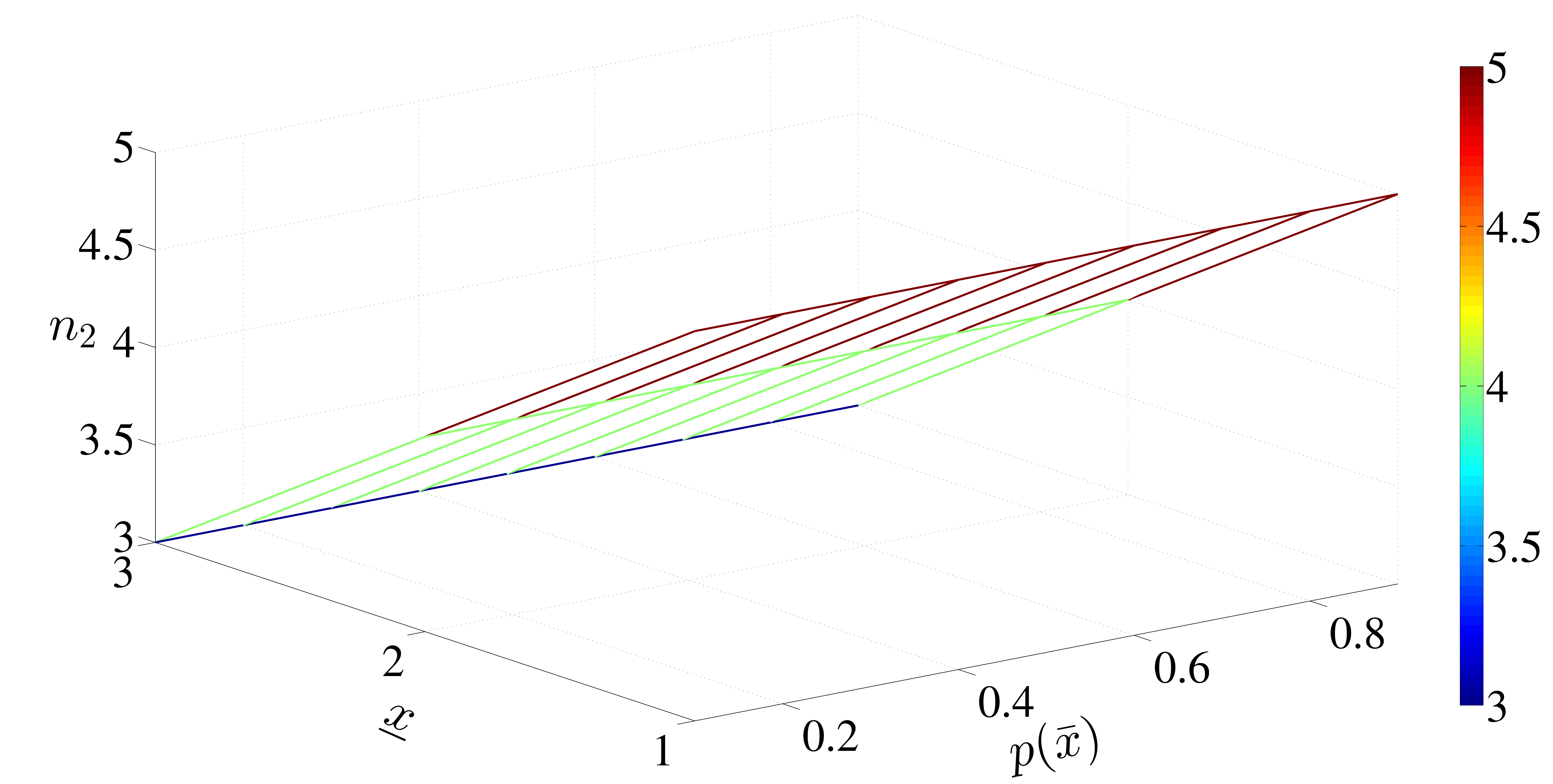}
		\end{minipage} & \begin{minipage}{0.28\textwidth}
			\includegraphics[width=5cm]{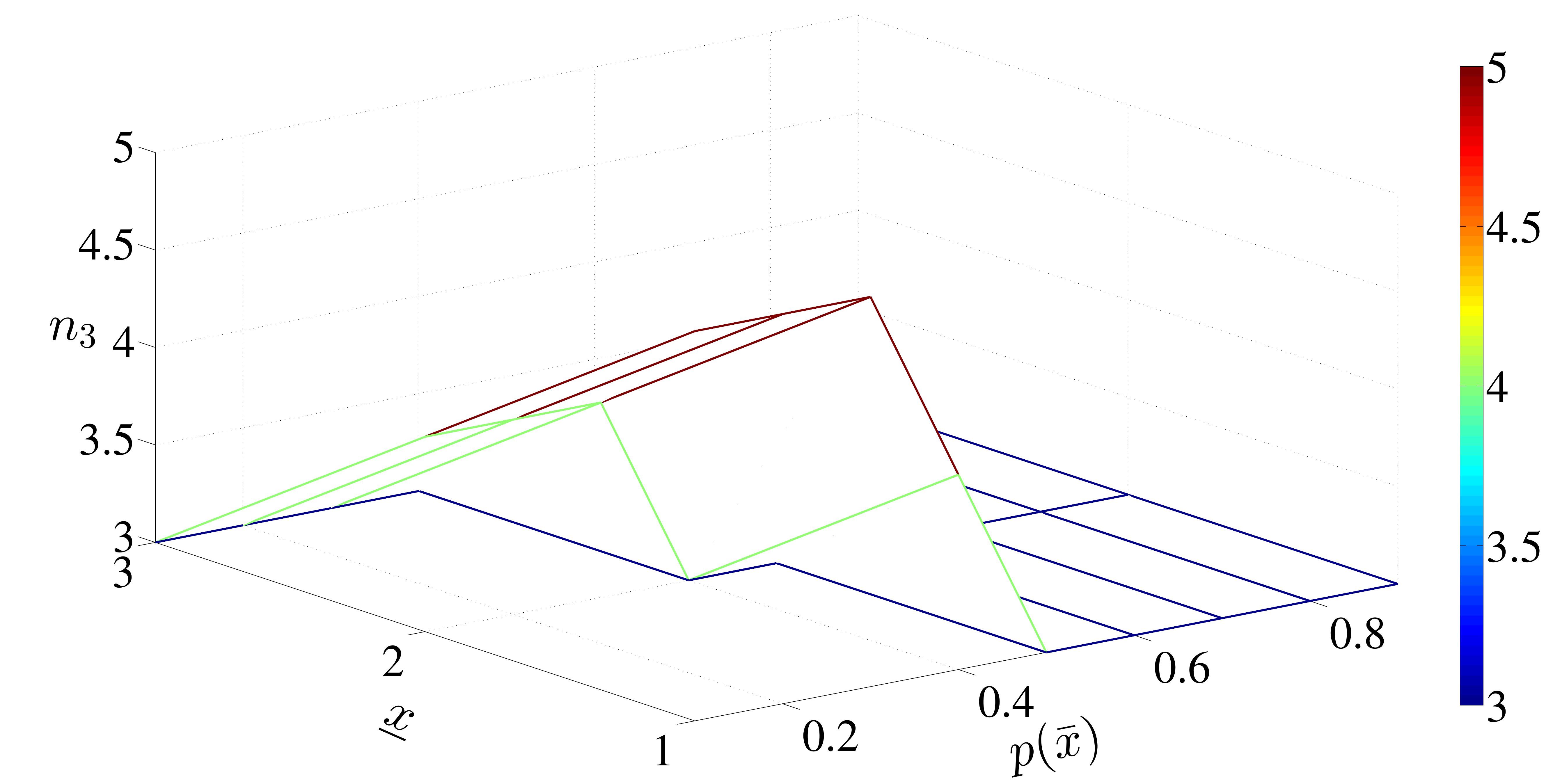}
		\end{minipage} \\ \hline
		$p(\bar{x})=0.2$ & \begin{minipage}{0.28\textwidth}
			\includegraphics[width=5cm]{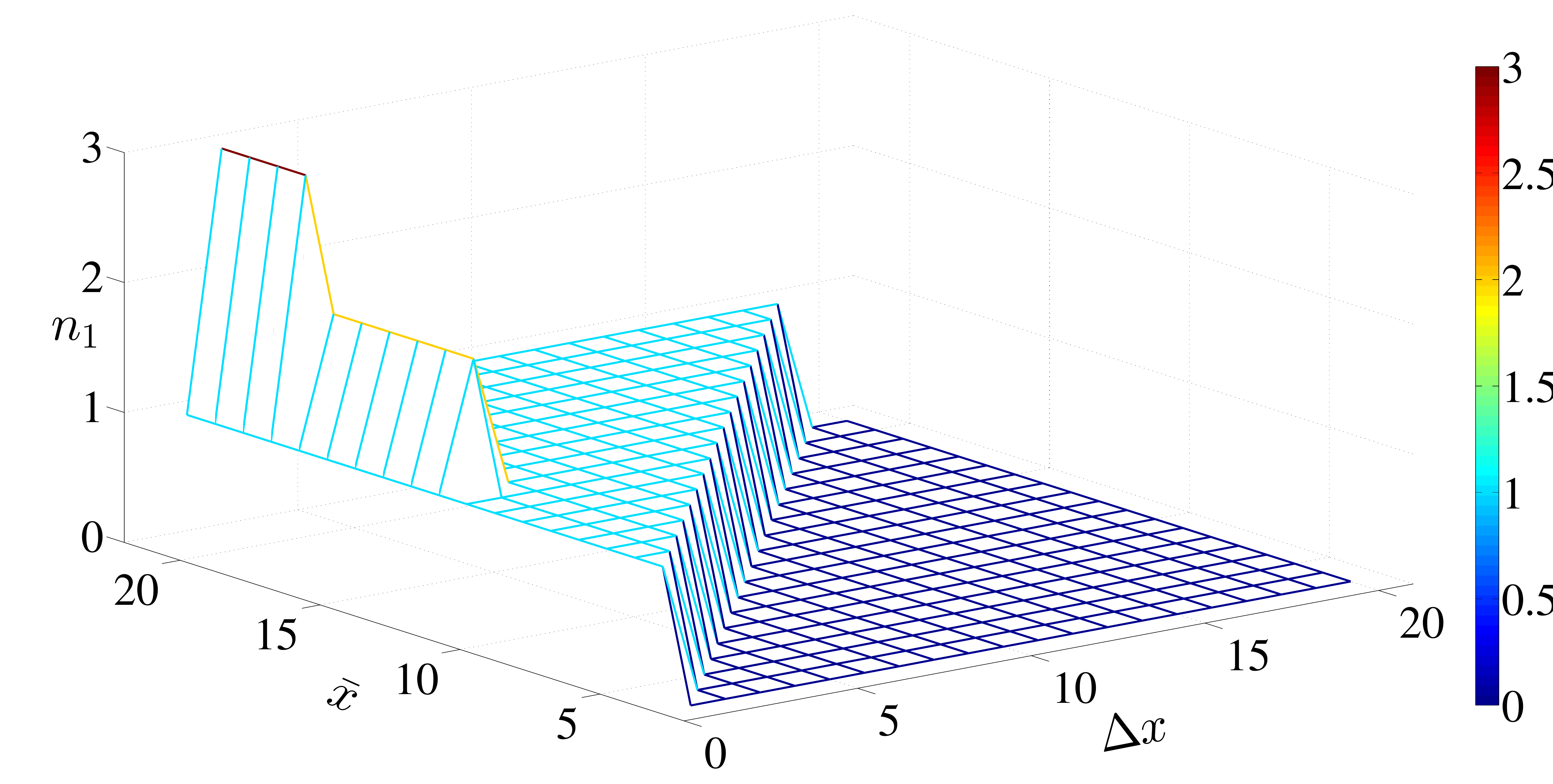}
		\end{minipage} & \begin{minipage}{0.28\textwidth}
			\includegraphics[width=5cm]{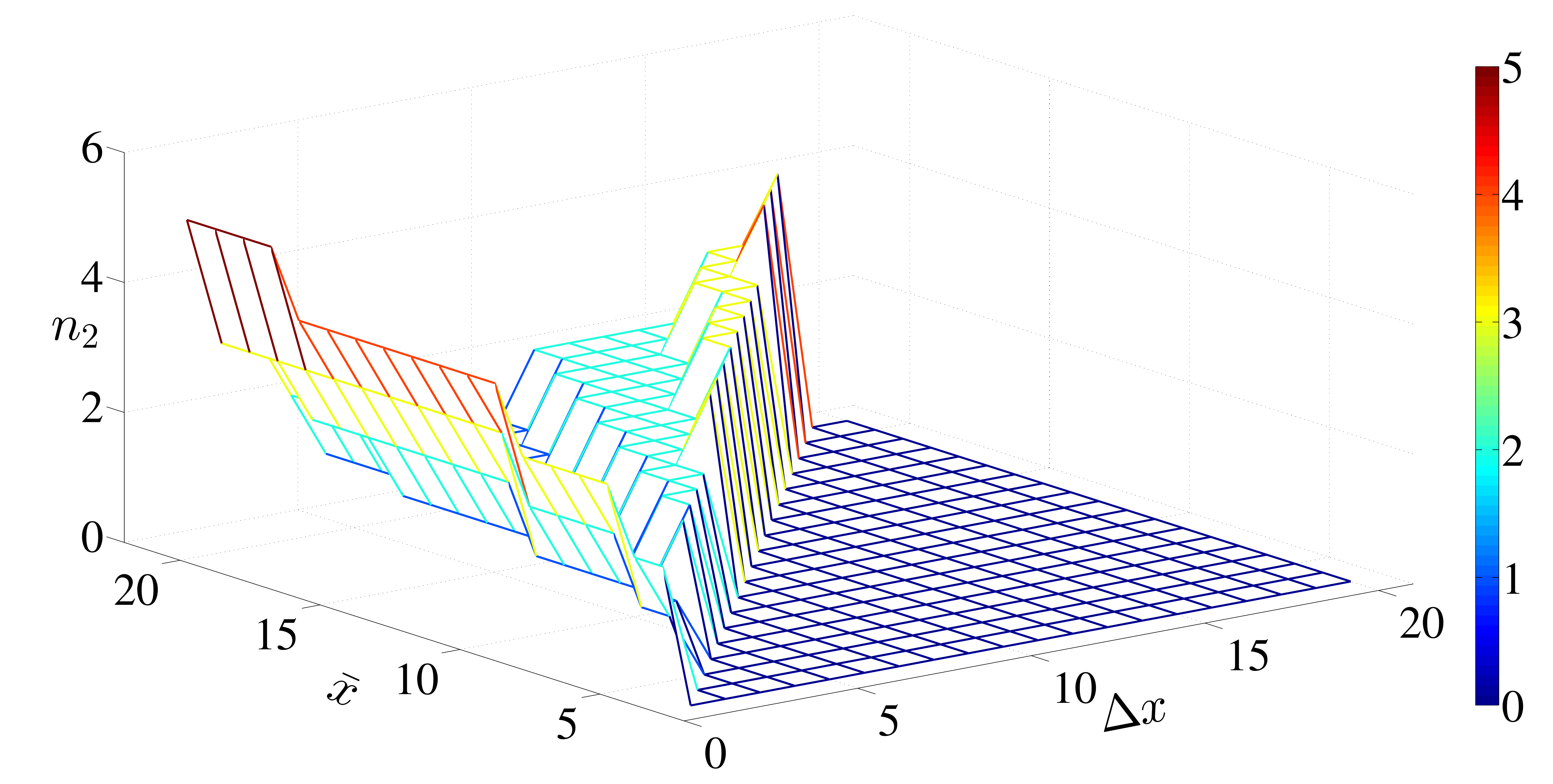}
		\end{minipage} & \begin{minipage}{0.28\textwidth}
			\includegraphics[width=5cm]{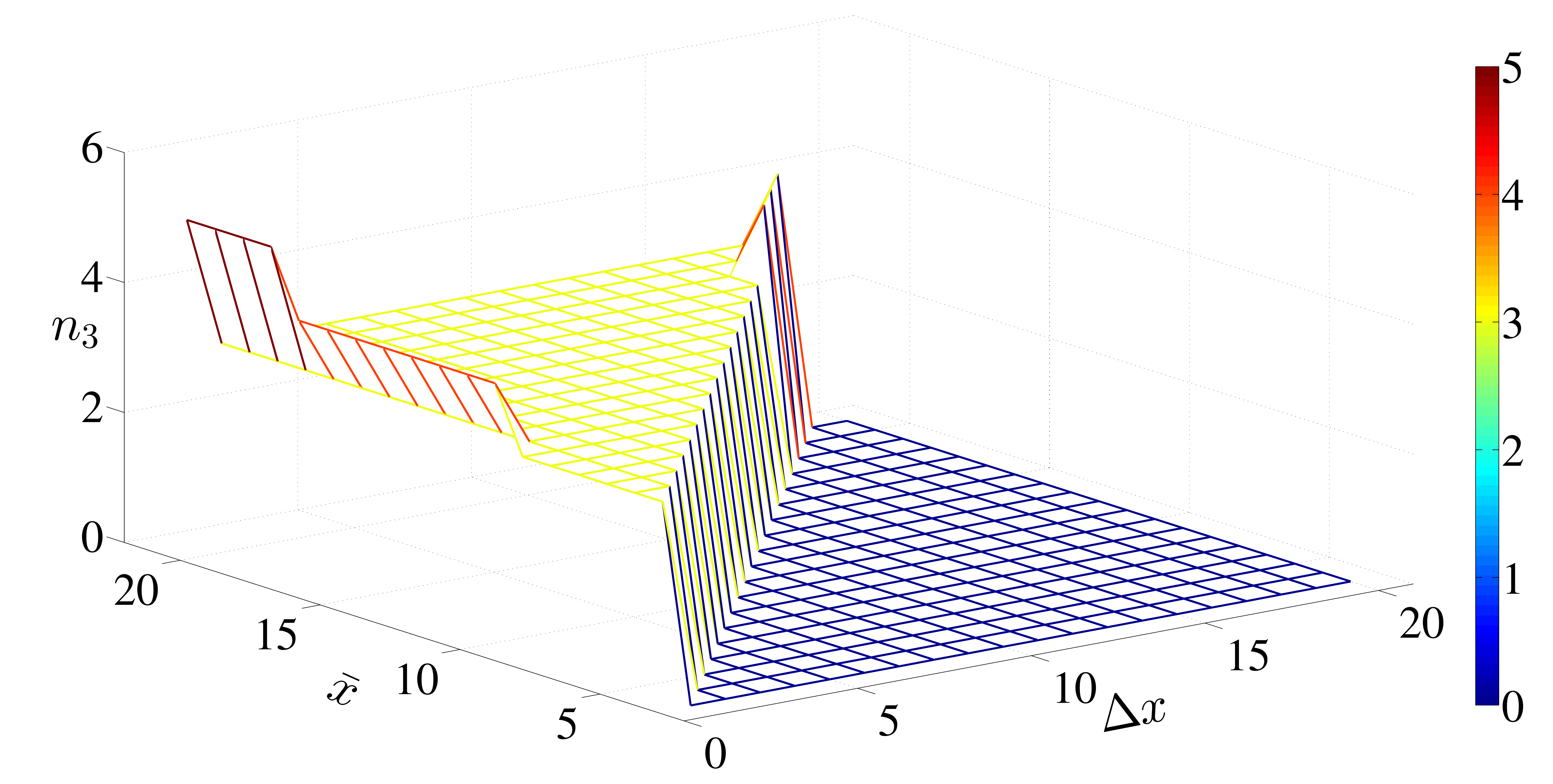}
		\end{minipage} \\ \hline
		$p(\bar{x})=0.9$ & \begin{minipage}{0.28\textwidth}
			\includegraphics[width=5cm]{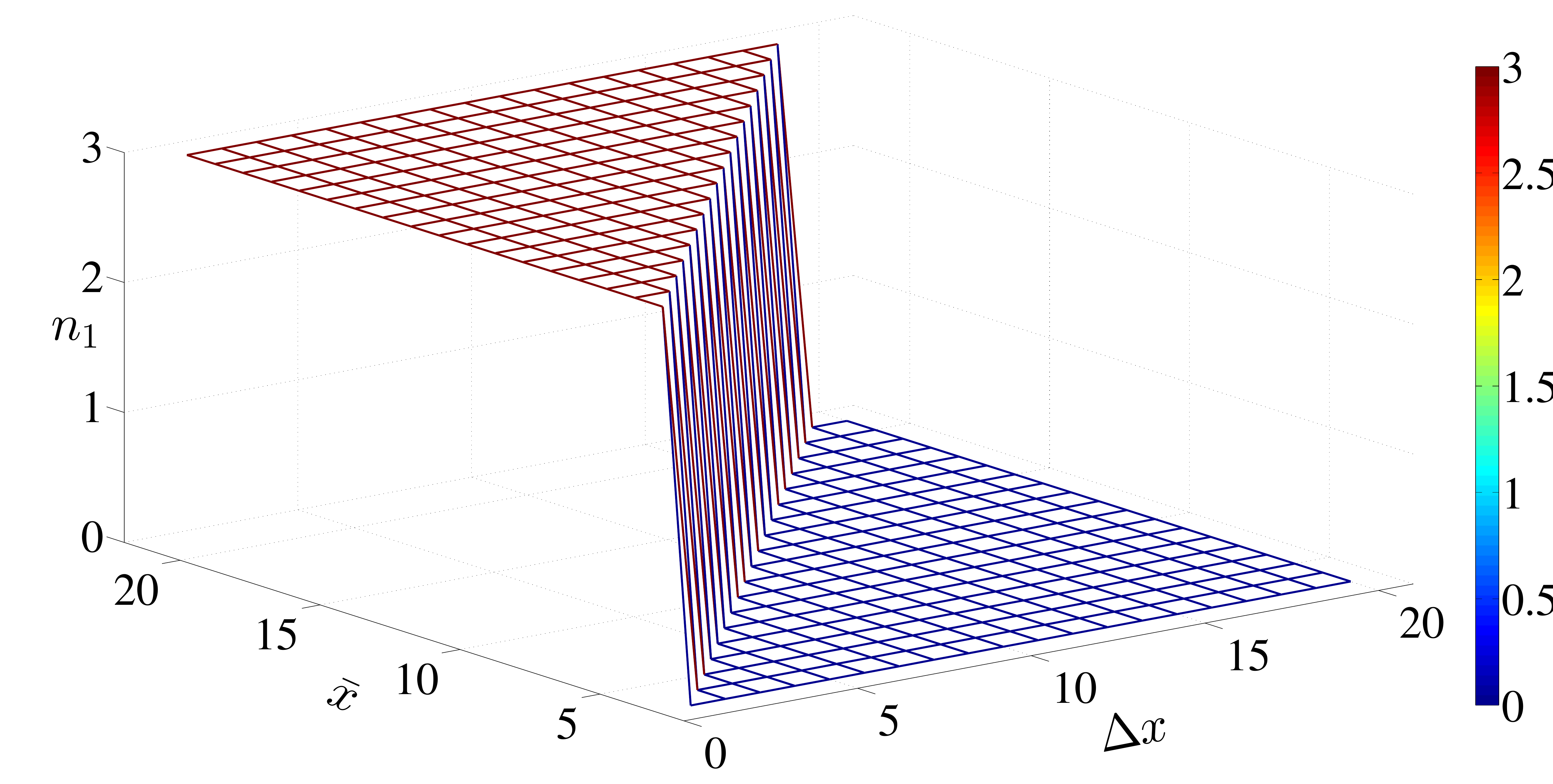}
		\end{minipage} & \begin{minipage}{0.28\textwidth}
			\includegraphics[width=5cm]{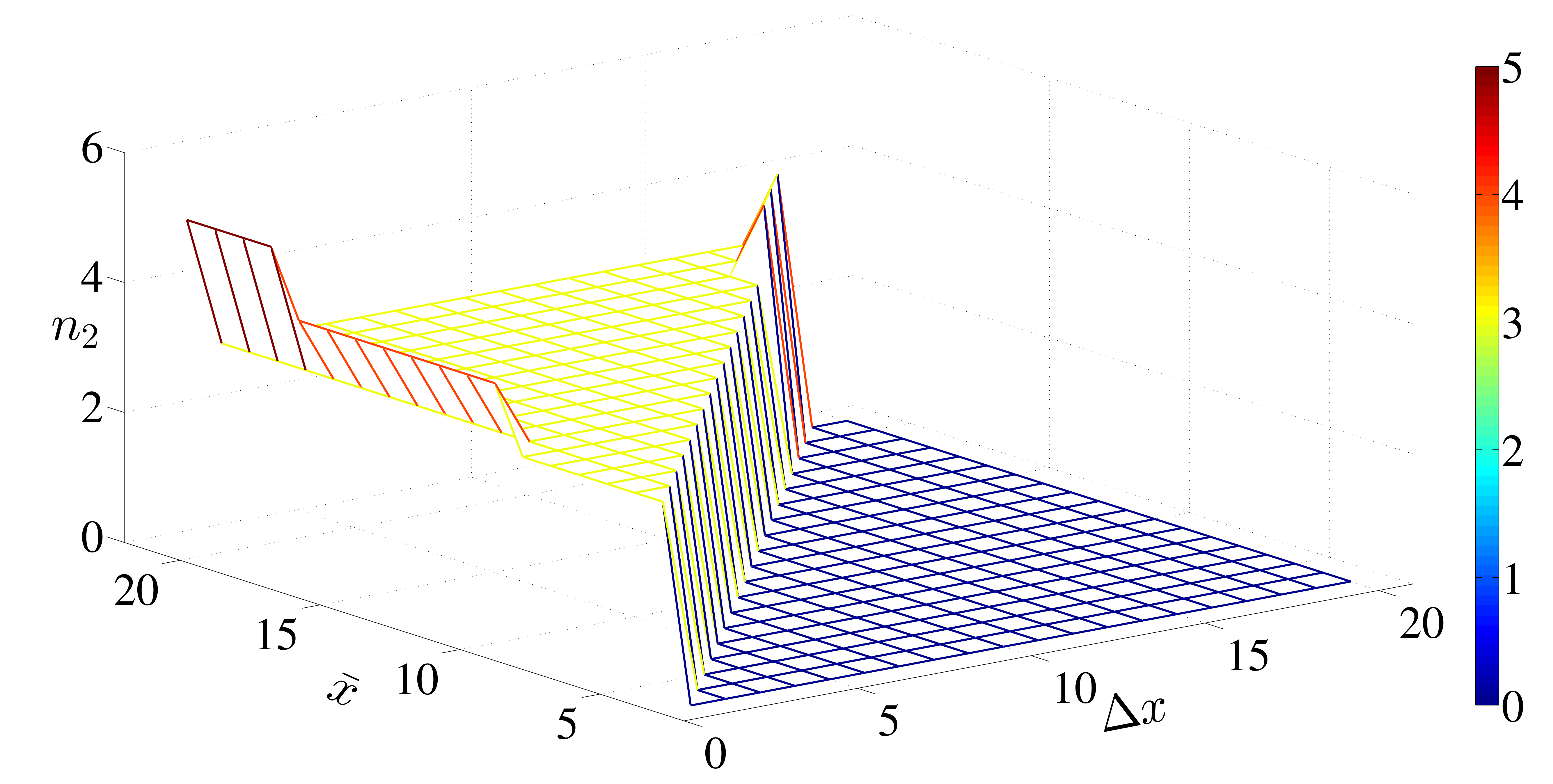}
		\end{minipage} & \begin{minipage}{0.28\textwidth}
			\includegraphics[width=5cm]{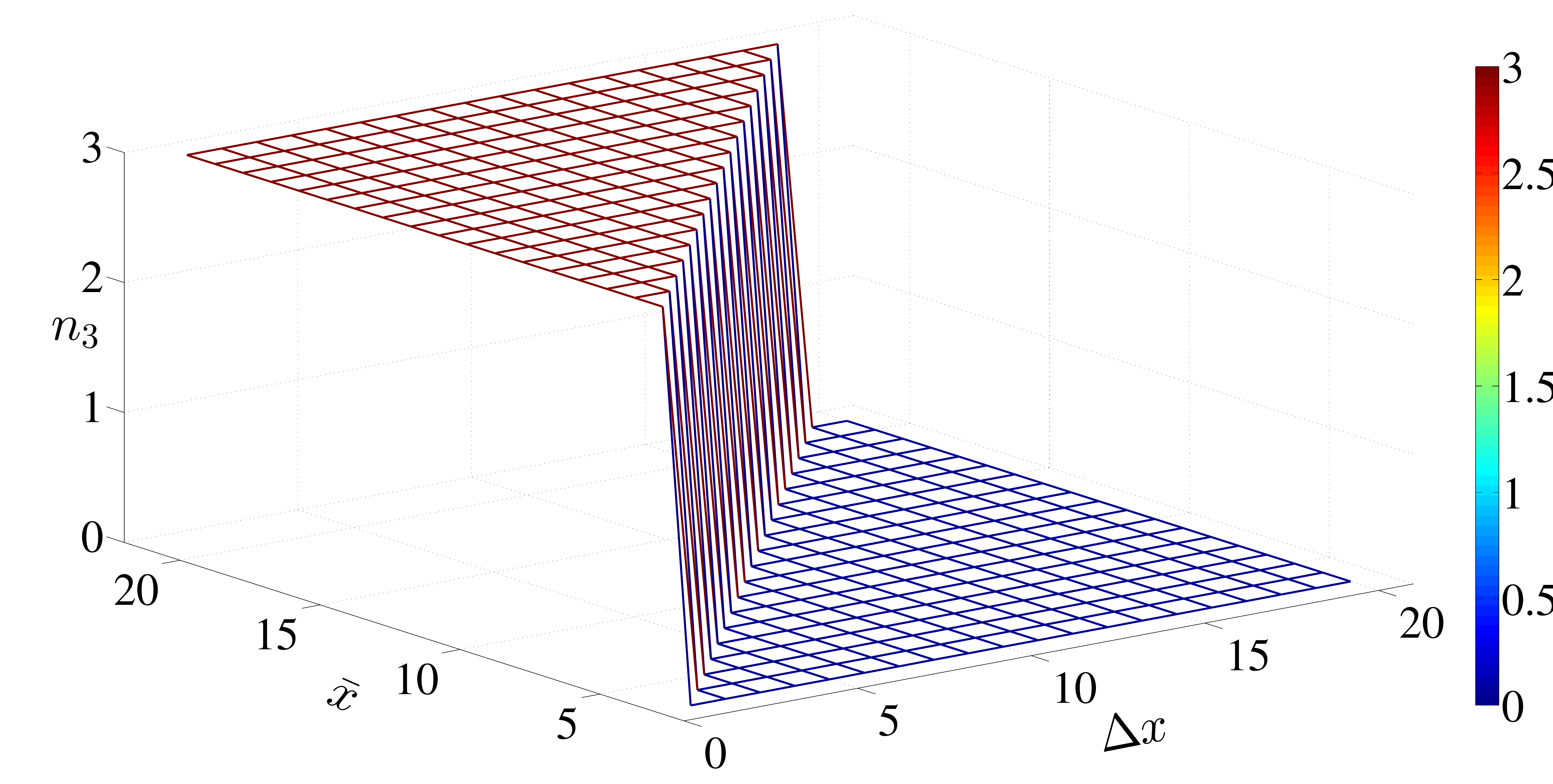}
		\end{minipage} \\ \hline
		$\bar{x}=4$ & \begin{minipage}{0.28\textwidth}
			\includegraphics[width=5cm]{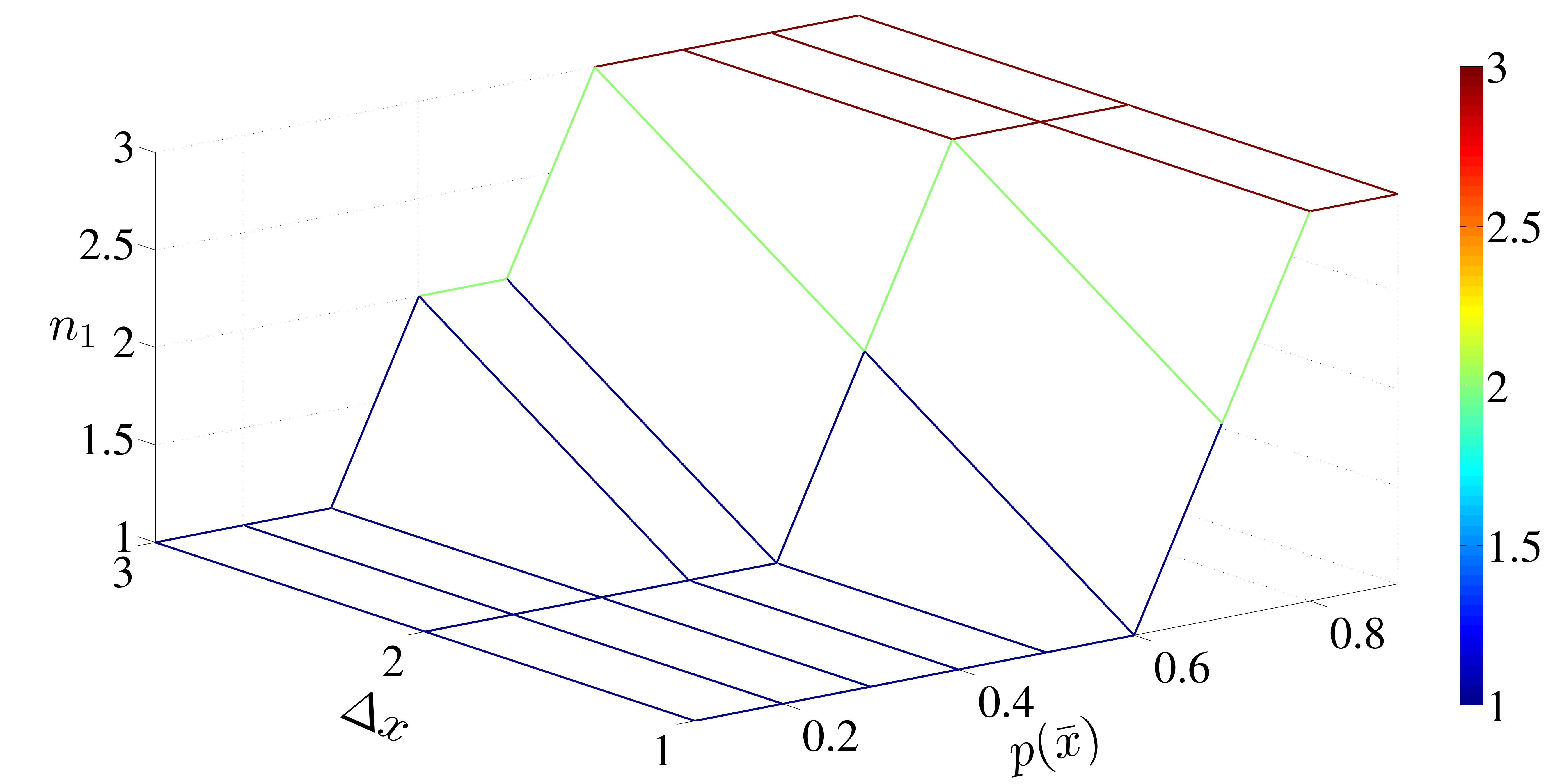}
		\end{minipage} & \begin{minipage}{0.28\textwidth}
			\includegraphics[width=5cm]{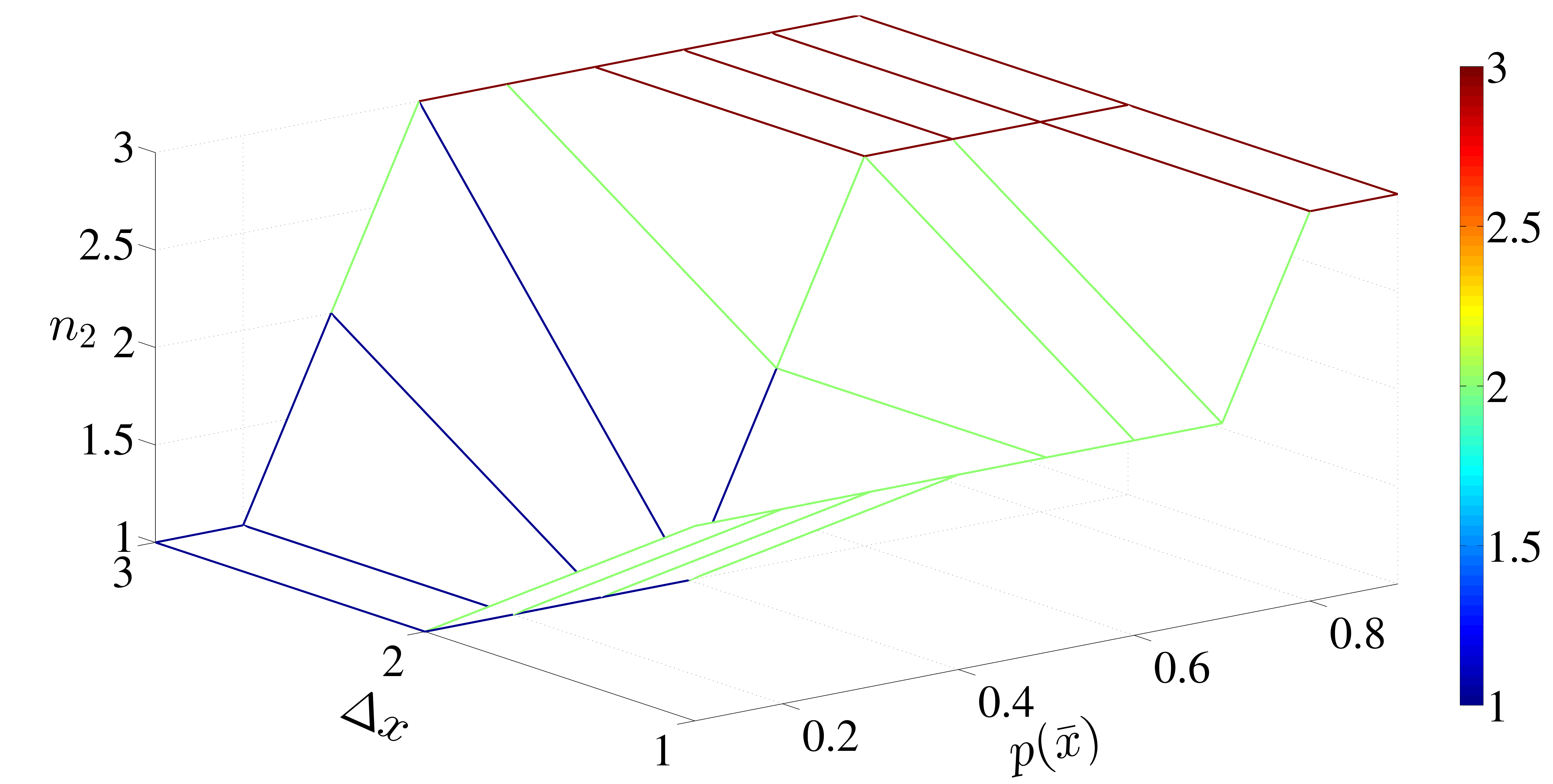}
		\end{minipage} & \begin{minipage}{0.28\textwidth}
			\includegraphics[width=5cm]{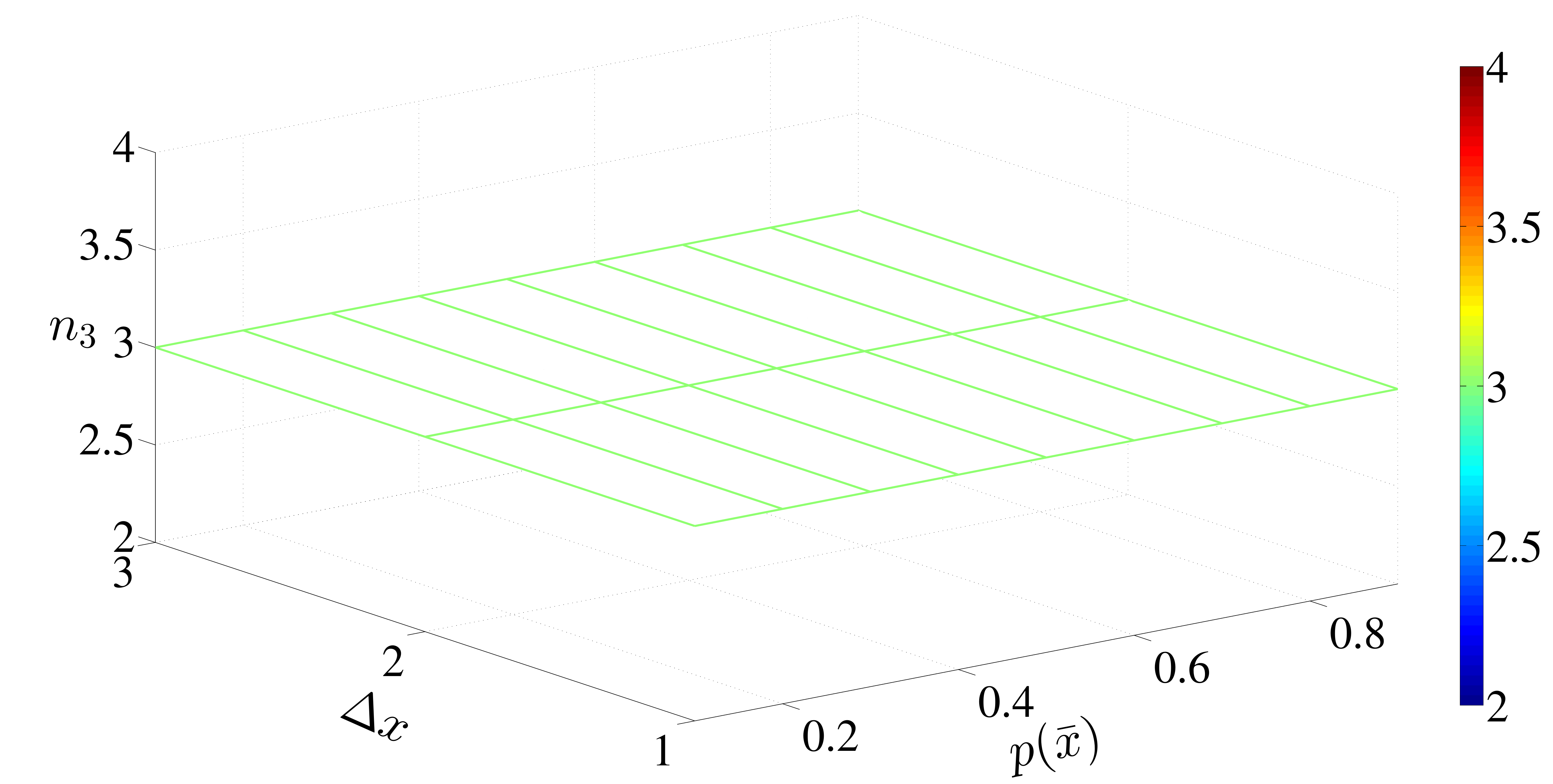}
		\end{minipage} \\ \hline
		$\bar{x}=18$ & \begin{minipage}{0.28\textwidth}
			\includegraphics[width=5cm]{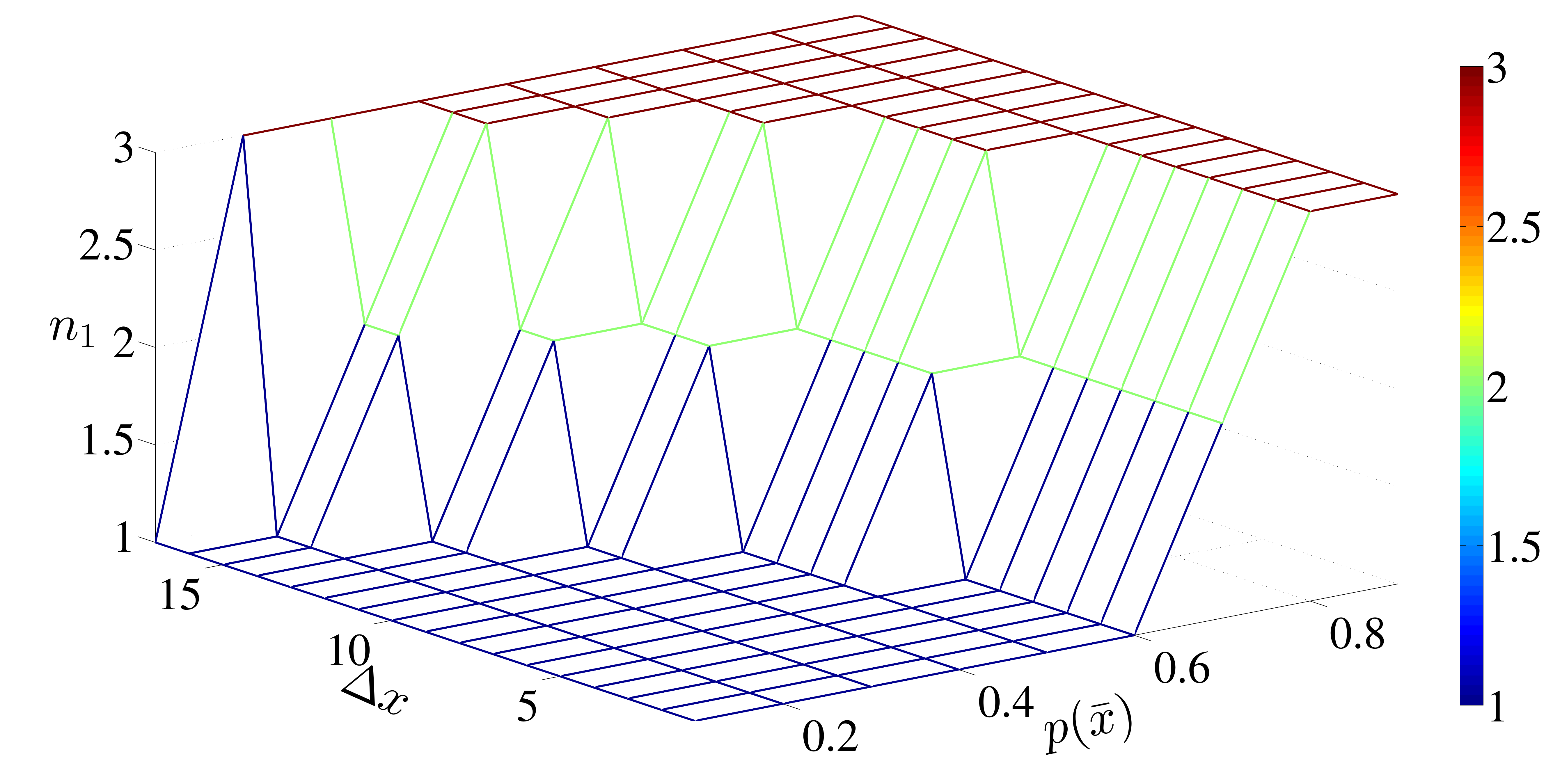}
		\end{minipage} & \begin{minipage}{0.28\textwidth}
			\includegraphics[width=5cm]{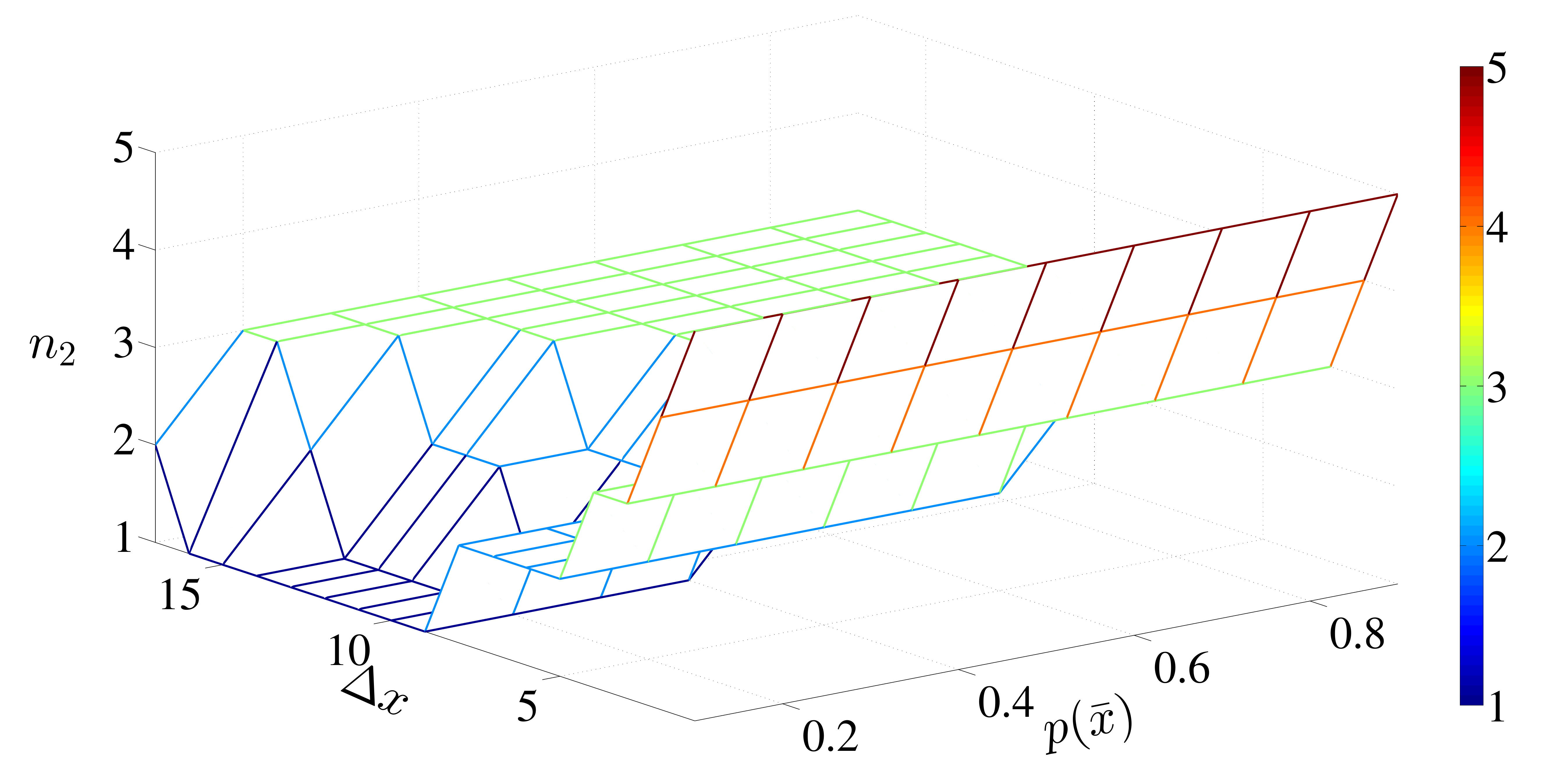}
		\end{minipage} & \begin{minipage}{0.28\textwidth}
			\includegraphics[width=5cm]{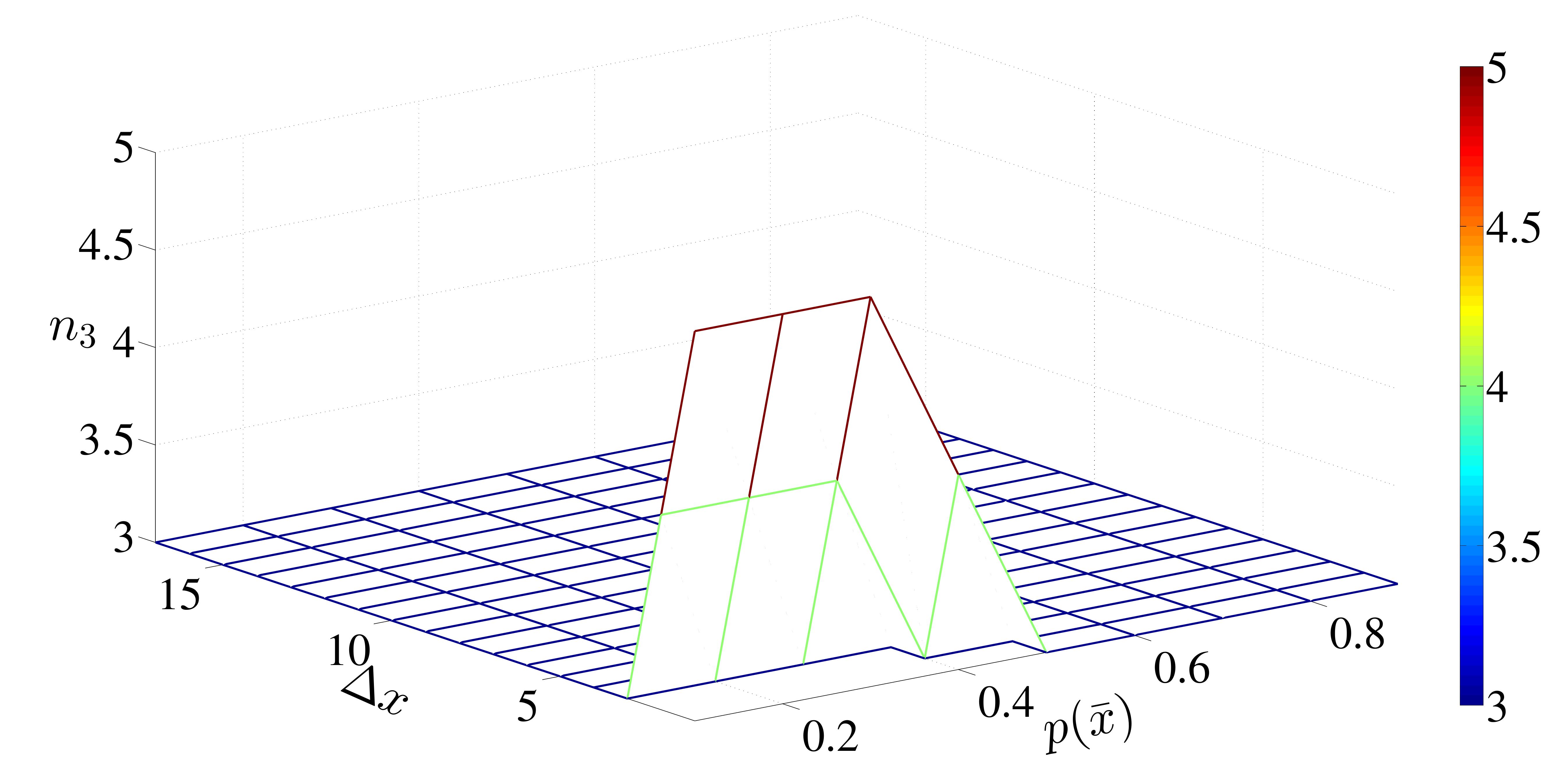}
		\end{minipage} \\ \hline
	\end{tabular}
\end{table*}

It is worth noting that since $t_1$, $t_2$ and $t_3$ in the experiments present no obvious trend\footnote{The underlying reason is the limitation of positive integers for $n_1$, $n_2$ and $n_3$. And thus when $n_i~(i\in \{1,2,3\})$ changes in the integer dimension, the originally continuous $t_i$ will come to abrupt change so we can't find obvious changing trend of $t_1$, $t_2$ and $t_3$.}, we do not analyze them in detail.

\subsection{Three workers}
Similar to the situation of two workers, we utilize the following set of functions to calculate \eqref{eq:19} to \eqref{eq:22} and \eqref{eq:25}:
$f(n)=2^n-1$, $R_1=3\bar{x}n_1^2$, $R_2=(2\bar{x}+\underline{x})n_2^2$, $R_3=(\bar{x}+2\underline{x})n_3^2$, $R_4=3\underline{x}n_4^2$.
We still limit $n_i(i\in\{1,2,3,4\})$ to positive integers and do linear programming to solve the constrained optimization problem.

In Fig. \ref{chart6-7}, we compare the different trends of requestor's utility $\mathbb{U}_r$ changing with $p(\bar{x})$ when the numbers of workers are two and three, respectively, with two different parameter settings of $\bar{x}$ and $\underline{x}$. %In Fig. \ref{figure13} $\bar{x}=80,\underline{x}=20$, and in Fig. \ref{figure14} $\bar{x}=50,\underline{x}=10$.
Both figures show that the requestor's utility is positively related to $p(\bar{x})\in [0.1,0.9]$.
It is worth noting that according to the experimental results, the requestor's utility does not increase with the increasing number of workers. To be specific, when $p=0.4$ in Fig. \ref{figure13}, the requestor's  optimal utility of the two-worker case is higher than that of the three-worker case, while when $p=0.9$, three workers is much more better than two workers. % in terms of the requestor's  optimal utility.
This is because more workers incur more constraints in the optimization problem, which leads to that fewer workers sometimes can bring more utility to the requestor.

Note that the changing trends of utility with $p(\bar{x})$, $\bar{x}$, $\underline{x}$ is similar to that of the two-worker situation, so we exclude these results due to the limitation of page length.
\begin{figure}
	\centering
	\subfigure[$\bar{x}=80,\underline{x}=20$.]{
		\includegraphics[width=0.21\textwidth]{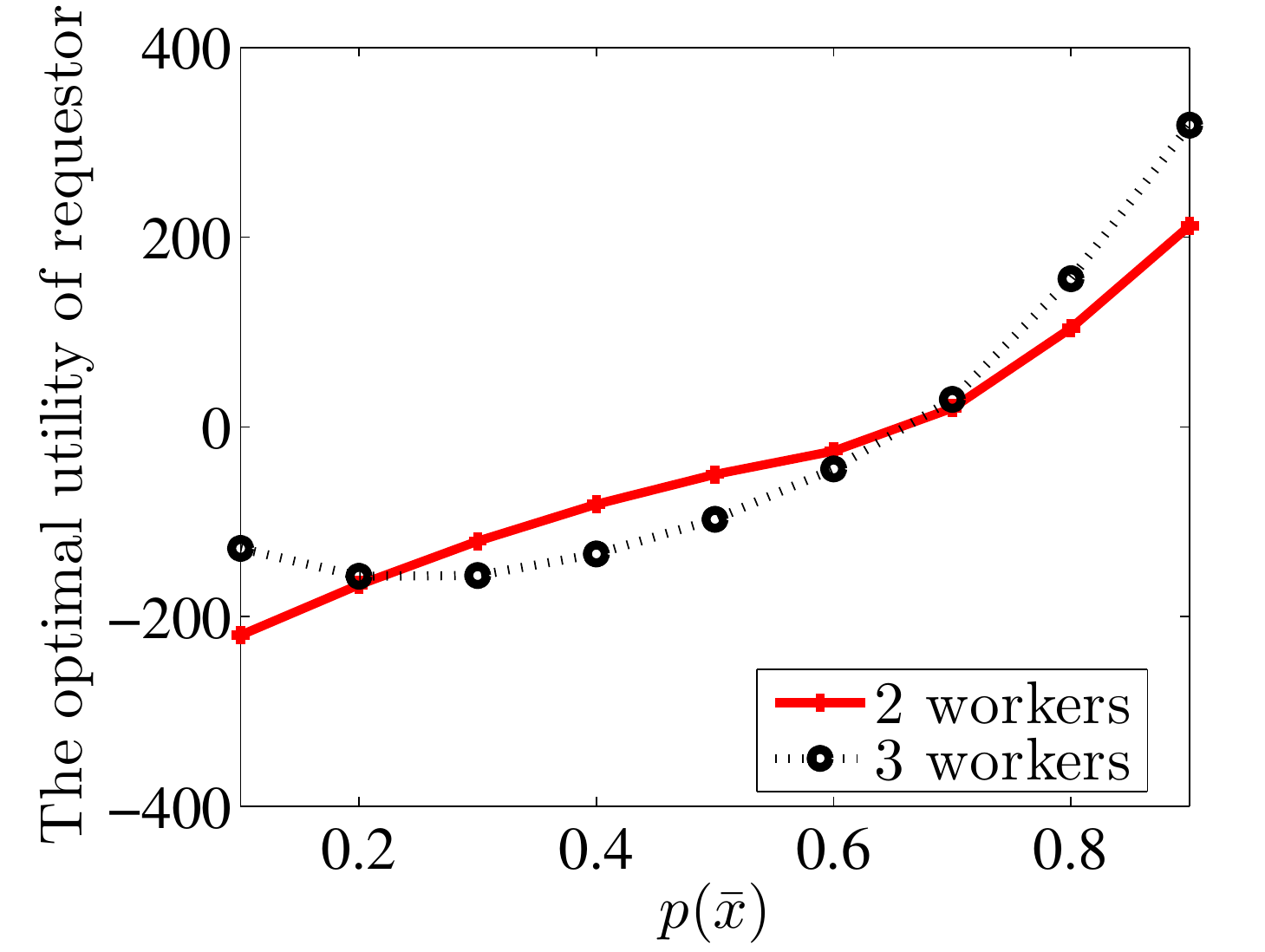}
		\label{figure13}
	}
	\quad
	\subfigure[$\bar{x}=50,\underline{x}=10$.]{
		\includegraphics[width=0.21\textwidth]{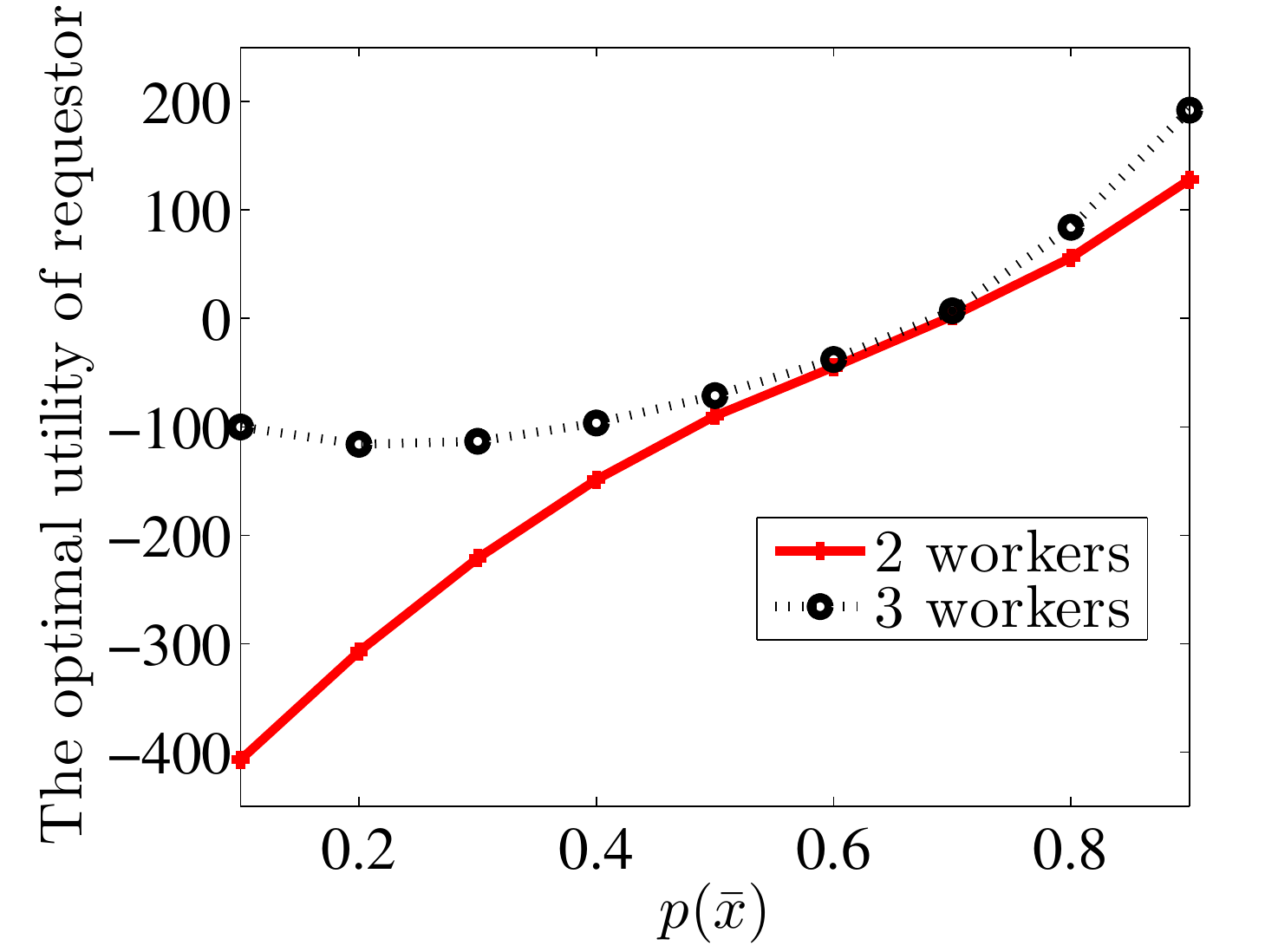}
		\label{figure14}
	}
	\caption{Optimal utility comparison between two workers and three workers.}
	\label{chart6-7}
\end{figure}

\section{Conclusion}\label{sec:conclusion}
In this paper, we propose a misreport- and
collusion-proof crowdsourcing mechanism, guiding workers to truthfully report the quality of submitted tasks without collusion by leveraging pricing and
task allocation.  Extensive simulation results verify the effectiveness of the
proposed mechanism, based on which we can obtain three counterintuitive findings:
1) a high-quality worker may pretend to be a low-quality
one; 2) the rise of task quality from high-quality workers may not
result in the increased utility of the requestor; 3)  the utility of the requestor may not get improved with the
increasing number of workers. In addition, we can also draw the following conclusions: 1) with the increase of $p(\bar{x})$, $n_1$ and $n_2$ increase  while $n_3$ decreases; 2) the utility of the requestor goes up with the increase of  $p(\bar{x})$. Moreover,
the higher the $\Delta x$, the larger the increment of her utility.

\bibliographystyle{IEEEtran}
\bibliography{reference}

\clearpage
\begin{IEEEbiography}[{\includegraphics[width=1in,height=1.25in,clip,keepaspectratio]{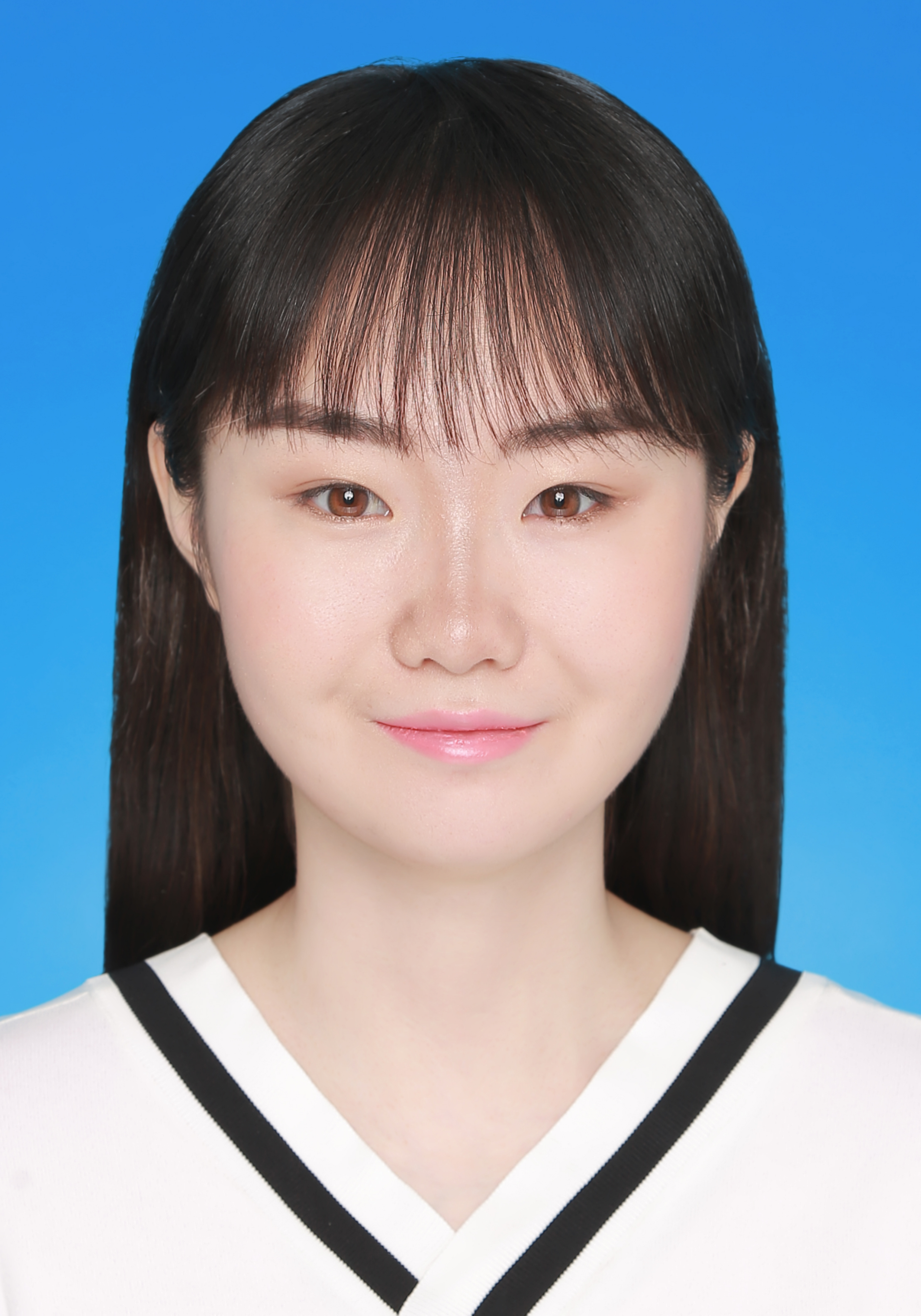}}]{Kun Li}
	received the B.S. degree in 2017 from  School of Artificial Intelligence, Beijing Normal University. Now, she is a graduate student in School of Artificial Intelligence, Beijing Normal University. She joined the Center for Big Data Mining \& Knowledge Engineering for research work on September 2017. Her research interests include crowdsourcing and game theory.
\end{IEEEbiography}
\begin{IEEEbiography}[{\includegraphics[width=1in,height=1.25in,clip,keepaspectratio]{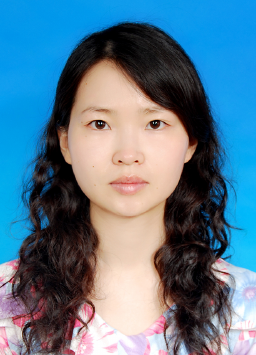}}]{Shengling Wang}
	is a full professor in the School of Artificial Intelligence, Beijing Normal University. She received her Ph.D. in 2008 from Xi'an Jiaotong University. After that, she did her postdoctoral research in the Department of Computer Science and Technology, Tsinghua University. Then she worked as an assistant and associate professor from 2010 to 2013 in the Institute of Computing Technology of the Chinese Academy of Sciences. Her research interests include mobile/wireless networks, game theory, crowdsourcing.
\end{IEEEbiography}
\begin{IEEEbiography}[{\includegraphics[width=1in,height=1.25in,clip,keepaspectratio]{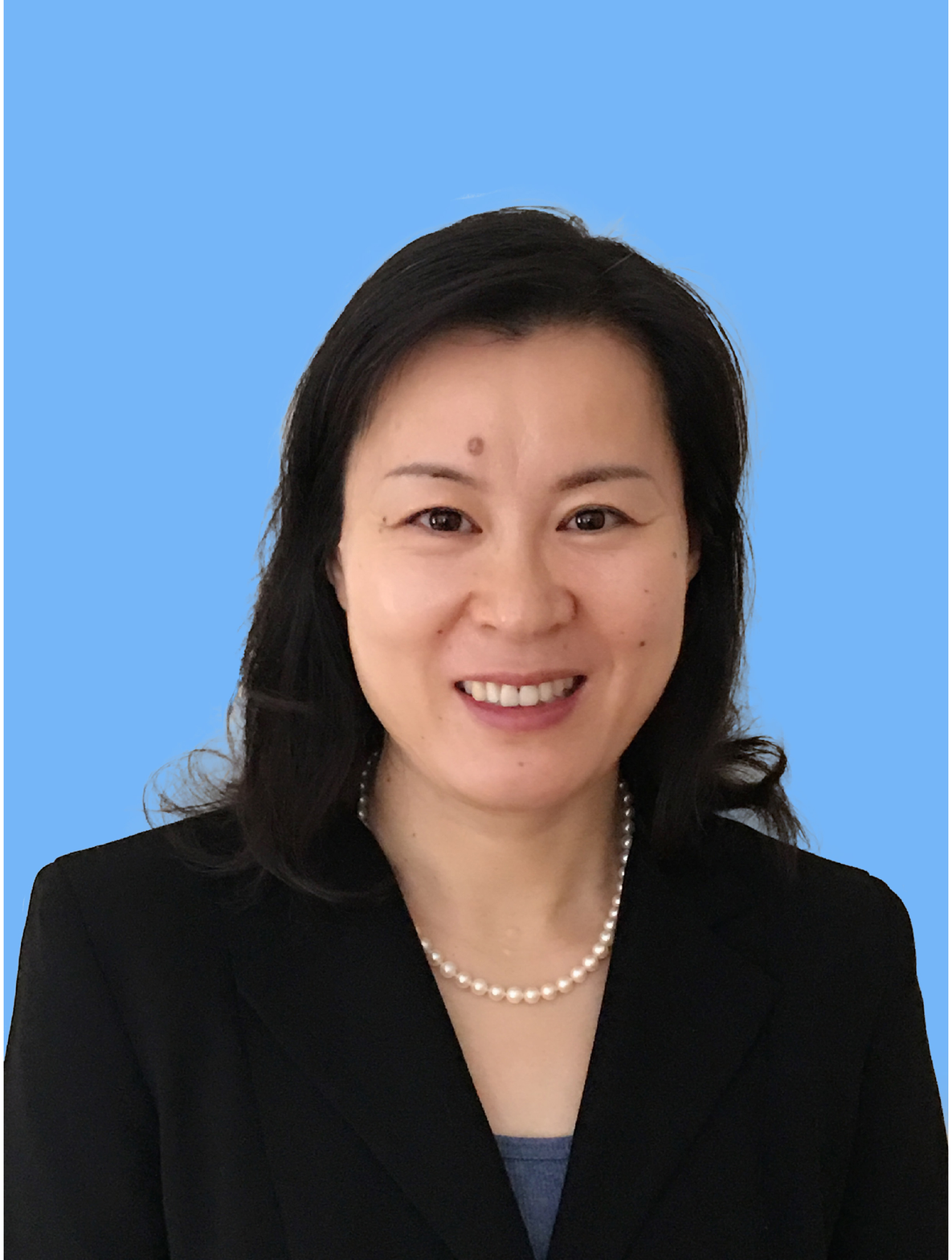}}]{Xiuzhen Cheng}
	received her M.S. and Ph.D. degrees in computer science from the University of Minnesota Twin Cities in 2000 and 2002, respectively. She is a professor in the School of Computer Science and Technology, Shandong University. Her current research interests focus on privacy-aware computing, wireless and mobile security, dynamic spectrum access, mobile handset networking systems (mobile health and safety), cognitive radio networks, and algorithm design and analysis.
	She has served on the Editorial Boards of several technical publications and the Technical Program Committees of various professional conferences/workshops. She has also chaired several international conferences. She worked as a program director for the U.S. National Science Foundation (NSF) from April to October 2006 (full time), and from April 2008 to May 2010 (part time). She published more than 170 peerreviewed papers.
\end{IEEEbiography}
\begin{IEEEbiography}[{\includegraphics[width=1in,height=1.25in,clip,keepaspectratio]{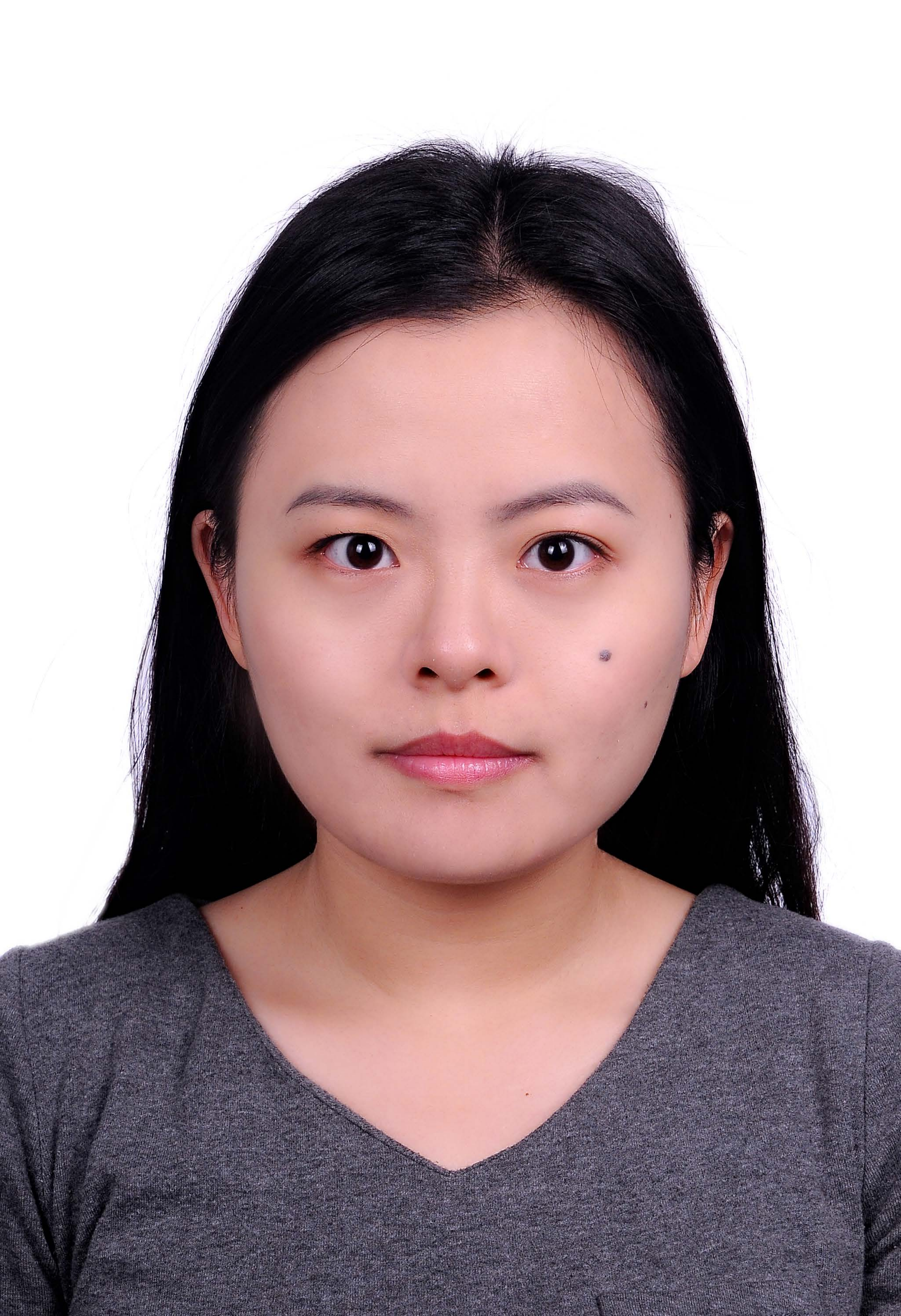}}]{Qin Hu} received her Ph.D. degree in Computer Science from the George Washington University in 2019. She is currently an Assistant Professor in the department of Computer and Information Science, Indiana University - Purdue University Indianapolis. Her research interests include wireless and mobile security, crowdsourcing/crowdsensing and blockchain.
\end{IEEEbiography}
\vfill

\end{document}